\documentclass{article}

\usepackage{amssymb, amsmath, amsthm,authblk}
\usepackage{natbib}
\usepackage{lmodern}
\usepackage[utf8]{inputenc}
\usepackage{amsmath}
\usepackage{amsfonts}
\usepackage{amssymb}
\usepackage{amsmath}
\usepackage{amsthm}
\usepackage[a4paper,left=3.3cm,right=3.3cm,bottom=2.2cm,top=2cm,foot=1.3cm,includeheadfoot=true]{geometry}
\usepackage{framed}
\usepackage[boxed]{algorithm2e}
\usepackage[hidelinks]{hyperref}
\usepackage{multirow,bigdelim}
\usepackage[babel=true]{csquotes}
\usepackage{tikz}
\usepackage{url}

\newtheorem{lemma}{Lemma}
\newtheorem{proposition}[lemma]{Proposition}

\newtheorem{theorem}[lemma]{Theorem}
\newtheorem{corollary}[lemma]{Corollary}

\usepackage{booktabs} % For formal tables
\usepackage{tikz}
\usepackage{mathdots}
\usepackage{esvect}
\usepackage{bbm}

\usepackage{xcolor}
\definecolor{darkgreen}{rgb}{0,.35,0}
\definecolor{darkblue}{rgb}{0,0,.5}
\definecolor{darkred}{rgb}{.6,0,0}

\usepackage{hyperref}
\hypersetup{pdfauthor={Mark Giesbrecht, Armin Jamshidpey, \'Eric Schost},
  pdftitle={Quadratic Probabilistic Algorithms for Normal Bases},
  bookmarksnumbered=true,colorlinks,linkcolor=darkblue,
 citecolor=darkgreen, urlcolor=darkred}
% ]{hyperref}

\numberwithin{equation}{section}

%\citestyle{acmauthoryear}

\newcommand{\osumcost}{O(n^{(3/4)\cdot \omega(4/3)})}
\newcommand{\osumcosttilde}{\tilde{O}(n^{(3/4)\cdot \omega(4/3)})}
\newcommand{\thecost}{\tilde{O}(\vert G \vert ^{(3/4)\cdot \omega(4/3)})}

\newcommand{\FF}{{\mathbb{F}}}
\newcommand{\xbar}{\xi}
\newcommand{\zbar}{\zeta}
\newcommand{\alg}{quadratic\,}
\newcommand{\citeN}{\cite}

{
      \theoremstyle{acmplain}
      \newtheorem{assumption}{Assumption}
  }

{
      \theoremstyle{acmplain}
      
  }

% Copyright
%\setcopyright{none}
%\setcopyright{acmcopyright}
%\setcopyright{acmlicensed}
%\setcopyright{rightsretained}
%\setcopyright{usgov}
%\setcopyright{usgovmixed}
%\setcopyright{cagov}
%\setcopyright{cagovmixed}

% DOI
%\acmDOI{10.475/123_4}

% ISBN
%\acmISBN{123-4567-24-567/08/06}

%Conference
%\acmConference[ISSAC]{International Symposium on Symbolic and Algebraic Computation}{2019}{Beijing}
%\acmYear{2019}
%\copyrightyear{2019}

%\acmArticle{4}
%\acmPrice{15.00}

% These commands are optional
%\acmBooktitle{Transactions of the ACM Woodstock conference}
%\editor{Jennifer B. Sartor}
%\editor{Theo D'Hondt}
%\editor{Wolfgang De Meuter}
\title{Quadratic Probabilistic Algorithms for Normal Bases}

\author{Mark Giesbrecht\footnote{mwg@uwaterloo.ca}}
\author{Armin Jamshidpey \footnote{ armin.jamshidpey@uwaterloo.ca}}
\author{\'Eric Schost \footnote{eschost@uwaterloo.ca}}
\affil{David Cheriton School of Computer Science\\ University of Waterloo, Canada}

\begin{document}
\maketitle

\newcommand{\F}{{\mathsf{F}}}
\newcommand{\K}{{\mathsf{K}}}

\newcommand{\NN}{{\mathbb{N}}}
\newcommand{\N}{{\mathbb{N}}}

\def\A{\mathbb{A}}
\def\H{\mathbb{H}}
\def\B{\mathbb{B}}
\def\Z{\mathbb{Z}}
\def\C{\mathbb{C}}
\def\Q{\mathbb{Q}}
\def\D{\mathbb{D}}
\newcommand{\QQ}{\mathbb{Q}}
\newcommand{\mat}[1]{\mathbf{\MakeUppercase{#1}}} % for a matrix

\begin{abstract}
  It is well known that for any finite Galois extension field $\K/\F$,
  with Galois group $G = \mathrm{Gal}(\K/\F)$, there exists an element
  $\alpha \in \K$ whose orbit $G\cdot\alpha$ forms an $\F$-basis of
  $\K$. Such an element $\alpha$ is called \emph{normal} and
  $G\cdot\alpha$ is called a normal basis. In this paper we introduce
  a probabilistic algorithm for finding a normal element when $G$ is
  either a finite abelian or a metacyclic group. The algorithm is
  based on the fact that deciding whether a random element $\alpha \in
  \K$ is normal can be reduced to deciding whether $\sum_{\sigma \in
    G} \sigma(\alpha)\sigma \in \K[G]$ is invertible. In an algebraic
  model, the cost of our algorithm is quadratic in the size of $G$ for
  metacyclic $G$ and slightly subquadratic for abelian $G$.
\end{abstract}

\maketitle

\section{Introduction}

For a finite Galois extension field $\K/\F$, with Galois group $G =
\mathrm{Gal}(\K/\F)$, an element $\alpha \in \K$ is called
\emph{normal} if the set of its Galois conjugates $G \cdot \alpha =
\{ \sigma(\alpha): \sigma\in G\}$ forms a basis for $\K$ as a vector space over
$\F$. The existence of normal element for any finite Galois extension
is classical, and constructive proofs are provided in most algebra texts
(see, e.g., \cite{Lang}, Section 6.13).
%\cite{Wae70}, Section~8.11).
 
While there is a wide range of well-known applications of normal bases in
finite fields, such as fast exponentiation~\cite{GaGaPaSh00}, there also
exist applications of normal elements in characteristic zero.  For instance,
in multiplicative invariant theory, for a given permutation lattice and
related Galois extension, a normal basis is useful in computing the
multiplicative invariants explicitly~\cite{Jam18}.

A number of algorithms are available for finding a normal element in
characteristic zero fields and finite fields.  Because of their immediate
applications in finite fields, algorithms for determining normal elements
in this case are most commonly seen.  A fast randomized algorithm for
determining a normal element in a finite field $\FF_{q^n}/\FF_q$, where
$\FF_{q^n}$ is the finite field with $q^n$ elements for any prime power $q$
and integer $n>1$, is presented by \citeN{GatGie90}, with a cost of
$O(n^2+n\log q)$ operations in $\FF_q$.  A faster randomized algorithm is
introduced by \citeN{KalSho98}, with a cost of $O(n^{1.815}\log q)$
operations in $\FF_q$.  In the bit complexity model, Kedlaya and Umans showed
how to reduce the exponent of $n$ to $1.5+\varepsilon$ (for any
$\varepsilon > 0$), by leveraging their quasi-linear time algorithm for
{\em modular composition}~\cite{KeUm11}. \citeN{LenstraNormal} introduced a
deterministic algorithm to construct a normal element which uses $n^{O(1)}$
operations in $\FF_{q^n}/\FF_q$.  To the best of our knowledge, the
algorithm of \citeN{AugCam94} is the most efficient deterministic method,
with a cost of $O(n^3+n^2\log q)$ operations in $\FF_q$.

In characteristic zero, \citeN{SchSte93} gave an algorithm for finding
a normal basis of a number field over $\QQ$ with a cyclic Galois group
of cardinality $n$ which requires $n^{O(1)}$ operations in $\QQ$.
\citeN{Pol94} gives an algorithm for the more general case of finding
a normal basis in an abelian extension $\K/\F$ which requires
$n^{O(1)}$ in $\F$.  More generally in characteristic zero, for any
Galois extension $\K/\F$ of degree $n$ with Galois group given by a
collection of $n$ matrices, \citeN{Girstmair} gives an algorithm which
requires $O(n^4)$ operations in $\F$ to construct a normal element in
$\K$.

In this paper we present a new randomized algorithm for finding a normal
element for abelian and metacyclic extensions, with a runtime \alg
in the degree $n$ of the extension. The costs of all algorithms are
measured by counting \emph{arithmetic operations} in $\F$ at unit cost.
Questions related to the bit-complexity of our algorithms are challenging,
and beyond the scope of this paper.

Our main conventions are the following.
\begin{assumption}
  \label{assum}
  Let $\K/\F$ be a finite Galois extension presented as
  $\K=\F[x]/\langle P(x)\rangle$, for an irreducible polynomial $P\in
  \F[x]$ of degree $n$, with $\F$ of characteristic zero. Then,
  \begin{itemize}
  \item elements of $\K$ are written on the power basis $1,\xbar,\dots,\xbar^{n-1}$,
    where $\xbar := x \bmod P$;
  \item elements of $G$ are represented by their action on $\xbar$.
  \end{itemize}
\end{assumption}

In particular, for $g \in G$ given by means of $\gamma:=g(\xbar) \in \K$,
and $\beta = \sum_{0\leq i<n}\beta_i\xbar^i\in\K$, the fact that $g$ is an
$\F$-automorphism implies that $g(\beta)$ is equal to $\beta(\gamma)$, the
polynomial composition of $\beta$ at $\gamma$ (reduced modulo $P$).

Our algorithms combine techniques and ideas due
to~\cite{GatGie90,KalSho98}: $\alpha \in \K$ is normal if and only if
the element $S_\alpha := \sum_{g \in G} g(\alpha)g \in \K[G]$ is
invertible in the group algebra $\K[G]$. The algorithms choose
$\alpha$ at random; a generic choice is normal (so we expect $O(1)$
random trials to be sufficient). However, writing down $S_\alpha$
involves $\Theta(n^2)$ elements in $\F$, which precludes a
subquadratic runtime. Instead, knowing $\alpha$, the algorithms use a
randomized reduction to a similar question in $\F[G]$, that amounts to
applying a random projection $\ell:\K\to\F$ to all entries of
$S_\alpha$, giving us an element $s_{\alpha,\ell} \in \F[G]$. For
that, we adapt algorithms from~\cite{KalSho98} that were written for
Galois groups of finite fields.

Having $s_{\alpha,\ell}$ in hand, we need to test its
invertibility. In order to do so, we present an algorithm in the
abelian case which relies on the fact that $\F[G]$ is isomorphic to a
multivariate quotient polynomial ring by an ideal $(x^{e_i}_i-1)_{1
  \leq i \leq m}$, where $e_i$'s are positive integers. 

For metacyclic groups, two algorithms are introduced to solve the same
problem; which one is faster depends on the parameters defining our
group. Both algorithms are based on testing the invertibility of an
injective homomorphic image of $s_{\alpha,\ell}$ in a matrix algebra
over a product of fields. These questions are closely related to that
of Fourier transform over $G$, and it is worth mentioning that there
is a vast literature on fast algorithms for Fourier transforms (over
the base field $\C$). Relevant to our current context, we invite the
reader to consult \cite{ClaMu04,MaRockWol18} and references therein
for details. At this stage, it is not clear how we can apply these
methods in our context.

This paper is written from the point of view of obtaining improved
asymptotic complexity estimates. Since our main goal is to highlight
the exponent (in $n$) in our runtime analyses, costs are given using
the soft-O notation: $S(n)$ is in $\tilde{O}(T(n))$ if it is in
$O(T(n) \log(T(n))^c)$, for some constant $c$.

The main result of this paper is the following theorem; we use a
constant $\omega(4/3)$ that describes the cost of certain rectangular
matrix products (see the end of this section).
\begin{theorem}\label{thm:main}
  Under Assumption \ref{assum}, a normal element of $\K$ can be found
  using $\thecost$ operations in $\F$ if $G$ is abelian, with
  $(3/4)\cdot\omega(4/3)<1.99$.  Moreover, the same problem for
  metacyclic groups can be solved using $\tilde{O}(\vert G \vert^2)$
  operations in $\F$. The algorithms are randomized.
\end{theorem}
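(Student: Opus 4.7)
The plan is to convert the problem into testing invertibility of an element of a group algebra, then reduce the coefficient ring from $\K$ to $\F$ by a random projection, and finally exploit the structure of $\F[G]$ (abelian or metacyclic) to test invertibility quickly. The starting point, as outlined in the introduction, is the equivalence: an element $\alpha \in \K$ is normal if and only if $S_\alpha = \sum_{g \in G} g(\alpha)\, g$ is a unit in $\K[G]$. One draws $\alpha$ at random from a sufficiently large subset of $\K$; a Schwartz-Zippel argument on the determinant of the multiplication-by-$\alpha$ map shows that a generic $\alpha$ is normal, so $O(1)$ trials suffice in expectation.

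The first real step is to avoid handling $S_\alpha$ directly, since writing it down already costs $\Theta(n^2)$. Instead, pick a random $\F$-linear form $\ell : \K \to \F$ and form $s_{\alpha,\ell} := \sum_{g \in G} \ell(g(\alpha))\, g \in \F[G]$, which is the image of $S_\alpha$ under the $\K[G]$-to-$\F[G]$ map induced by $\ell$. One checks that if $S_\alpha$ is a unit in $\K[G]$, then $s_{\alpha,\ell}$ is a unit in $\F[G]$ with probability bounded away from zero over the choice of $\ell$; the underlying non-vanishing condition is a polynomial identity in the coefficients of $\ell$, which can be controlled either directly (in characteristic zero $\F$ is infinite) or after an inexpensive algebraic extension. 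Computing the $|G|$ coefficients $\ell(g(\alpha))$ is itself non-trivial, since each $g(\alpha)$ is a polynomial composition modulo $P$. Here I would adapt the Kaltofen-Shoup baby-step/giant-step strategy to the chosen presentation of $G$ (the primary decomposition of the abelian group, or the two generators of a metacyclic presentation), batching the compositions so that the total cost of producing $s_{\alpha,\ell}$ fits within the target bound.

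The second step is testing invertibility of $s_{\alpha,\ell}$ in $\F[G]$. For abelian $G \cong \Z/e_1\Z \times \cdots \times \Z/e_m\Z$, we have $\F[G] \cong \F[x_1,\ldots,x_m]/(x_1^{e_1}-1,\ldots,x_m^{e_m}-1)$, and invertibility reduces to arithmetic in this multivariate quotient; the cost-determining subroutine is a rectangular matrix multiplication of aspect ratio tuned so that two of the three dimensions are $|G|^{3/4}$ and the third is $|G|^{1/4}$, producing the exponent $(3/4)\cdot\omega(4/3)$ and yielding $(3/4)\cdot\omega(4/3)<1.99$ from known bounds on $\omega(4/3)$. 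For metacyclic $G$, represented as a semidirect product of a cyclic normal subgroup $C$ by a complementary cyclic quotient, I embed $\F[G]$ into a matrix algebra over a product of fields arising from the characters of $C$ and the twisting action of the quotient; invertibility then reduces to invertibility of a block-structured matrix of total size $|G|$, testable in $\tilde{O}(|G|^2)$.

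The main obstacle I anticipate is not any single step in isolation but the coupling between the complexity budget and the group structure: ensuring that the cost of producing $s_{\alpha,\ell}$ does not dominate, and, in the abelian case, balancing the dimensions of the rectangular matrix product so that the $\omega(4/3)$ exponent actually governs the running time. A secondary difficulty is verifying that the generic non-singularity conditions on both $\alpha$ and $\ell$ admit explicit probability estimates independent of $|\F|$, and handling the fact that the two algorithms for metacyclic groups trade off differently depending on the relative sizes of $C$ and the quotient, so that the uniform $\tilde{O}(|G|^2)$ claim requires selecting the better of the two in each regime.
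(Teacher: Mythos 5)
Your overall route is the same as the paper's (reduce normality to invertibility of $S_\alpha$ in $\K[G]$, project by a random $\ell$ to $s_{\alpha,\ell}\in\F[G]$ via a Kaltofen--Shoup baby-step/giant-step computation, then test invertibility in $\F[G]$), but there is a genuine gap in the correctness of the projection step. You only argue one direction: if $S_\alpha$ is a unit, then a generic $\ell$ keeps it a unit. For the algorithm to output a correct answer you also need the converse: if $\alpha$ is \emph{not} normal, then $\ell(\mat M_G(\alpha))$ must be singular (or at least singular with controlled probability), otherwise the test can certify a non-normal $\alpha$ and nothing in your proposal bounds that error. This direction is not a Schwartz--Zippel statement; a singular matrix over $\K$ can easily project to a nonsingular one over $\F$. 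The paper's Lemma~\ref{Lem:Proj} handles it by a Galois-descent argument: using the $G$-equivariance of $\mat M_G(\alpha)$ and a minimal-support kernel vector, one shows the kernel contains a vector with entries in $\F$, whence $\ell(\mat M_G(\alpha))$ is singular for \emph{every} $\ell$. Some argument of this kind (or an alternative verification of normality within budget) is indispensable and is missing from your plan.

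There is also a problem in your complexity accounting. The exponent $(3/4)\cdot\omega(4/3)$ does not come from the abelian invertibility test: it comes from computing $s_{\alpha,\ell}$ itself, where the baby-step/giant-step products are of shape $(n^{3/4}\times n^{3/4})$ times $(n^{3/4}\times n)$, i.e.\ $\omega(4/3)$ applied at size $n^{3/4}$; the aspect ratio you describe (two sides $|G|^{3/4}$, one side $|G|^{1/4}$) would give $\omega(1/3)$, not $\omega(4/3)$. Moreover, ``arithmetic in $\F[x_1,\dots,x_m]/(x_1^{e_1}-1,\dots,x_m^{e_m}-1)$'' is not by itself an invertibility test within the stated bound; the paper has to decompose this algebra, via Chinese remaindering and cyclotomic-polynomial identities, into univariate factors $\F[z]/\langle\Phi_d(z)\rangle$ with quasi-linear-time transforms, and only then tests invertibility by fast extended GCDs. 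Similarly, in the metacyclic case you cannot invoke ``the characters of $C$'' or a splitting of $\F[z]/\langle z^m-1\rangle$ into fields: that factorization is not computable by $\F$-arithmetic alone for a general characteristic-zero $\F$. The image $\psi(\beta)$ is an $s\times s$ matrix over the ring $\A=\F[z]/\langle z^m-1\rangle$, which has zero divisors, so a within-budget test needs either a determinant algorithm over such rings (the paper's $\tilde{O}(s^{2.7}m)$ deterministic variant) or a Wiedemann/Kaltofen--Saunders rank computation combined with gcd-based partial splitting of $z^m-1$ (the paper's Monte Carlo $\tilde{O}(|G|^2)$ variant); your proposal does not supply such a mechanism, so the uniform $\tilde{O}(|G|^2)$ claim is unsupported as stated.
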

Although the cost is quadratic in the size of input for a general
metacyclic group, in many cases it will be (slightly) subquadratic,
under specific conditions on the parameters defining $G$ (see
Section~\ref{sec:invertibility}).

Section \ref{sec:pre} of this paper is devoted to definitions and
preliminary discussions.  In Section \ref{sec:osum}, two
subquadratic-time algorithms are presented for the randomized reduction
of our main question to invertibility testing in $\F[G]$, for
respectively abelian and metacyclic groups.  Finally, in Section
\ref{sec:invertibility}, we show that the latter problem can be solved
in quasi-linear time for an abelian group; for metacyclic
groups, we give a quadratic time algorithm, and discuss cases when
the cost can be improved.

Our algorithms make
extensive of known algorithms for polynomial and matrix arithmetic; in
particular, we use repeatedly the fact that polynomials of degree $d$ in
$\F[x]$, for any field $\F$ of characteristic zero, can be multiplied in
$\tilde{O}(n)$ operations in $\F$~\cite{ScSt71}. As a result, arithmetic 
operations $(+,\times,\div)$ in $\K$ can all be done using $\tilde{O}(n)$ 
operations in $\F$~\cite{vzGathen13}.

For matrix arithmetic, we will rely on some non-trivial results on
rectangular matrix multiplication initiated by \citeN{LoRo83}. For $k \in
\mathbb{R}$, we denote by $\omega(k)$ a constant such that over any
ring, matrices of sizes $(n,n)$ by $(n,\lceil n^k \rceil)$ can be
multiplied in $O(n^{\omega(k)})$ ring operations (so $\omega(1)$ is
the usual exponent of square matrix multiplication, which we simply
write $\omega$).  The sharpest values known to date for most
rectangular formats are from~\cite{LeGall}; for $k=1$, the best known
value is $\omega \le 2.373$ by \citeN{LeGall14}. Over a field, we will
frequently use the fact that further matrix operations (determinant or
inverse) can be done in $O(n^\omega)$ base field operations.

%%% Local Variables:
%%% mode: latex
%%% TeX-master: "NormalBasisCharZero"
%%% End:

\section{Preliminaries}
\label{sec:pre}

One of the well-known proofs of the existence of a normal element for a
finite Galois extension \cite[Theorem 6.13.1]{Lang} suggests a randomized
algorithm for finding such an element. Assume $\K/\F$ is a finite Galois
extension with Galois group $G = \lbrace g_1 , \ldots , g_n \rbrace$. If
$\alpha \in \K$ is a normal element, then
\begin{equation}
  \label{eq:fstrow}
  \sum_{j=1}^n 
  c_j g_j(\alpha)=0, \,\,\, c_j \in \F 
\end{equation} 
implies $c_1 =\dots=c_n = 0$. For each
$i \in \lbrace 1, \ldots , n\rbrace$, applying $g_i$ to equation
\eqref{eq:fstrow} yields
\begin{equation} \label{eq:otherrow} \sum_{j=1}^n c_j g_i g_j(\alpha)=0.
\end{equation}
Using \eqref{eq:fstrow} and \eqref{eq:otherrow}, one can form the linear
system $\mat{M}_G(\alpha)\boldsymbol{c} = \textbf{0}$,
with $\boldsymbol{c} = [ c_1~ \cdots~c_n]^T$ and
 where, for $\alpha\in\K$,
\[
  \mat M_G(\alpha) =
  \begin{bmatrix}
    g_1 g_1(\alpha) & g_1 g_2(\alpha) & \cdots & g_1 g_n(\alpha) \\
    g_2 g_1(\alpha) & g_2 g_2(\alpha) & \cdots & g_2 g_n(\alpha) \\
    \vdots		& \vdots	& \vdots & \vdots \\
    g_n g_1(\alpha) & g_n g_2(\alpha) & \cdots & g_n g_n(\alpha) \\
  \end{bmatrix} \in M_n(\K).
\]
Classical proofs then proceed to show that there exists $\alpha \in \K$
with $\det(\mat M_G(\alpha))\neq 0$.
 
This approach can be used as the basis of a randomized algorithm for
finding a normal element: choose a random element $\alpha$ in $\K$ until we
find one such that $ \mat M_G(\alpha)$ is invertible. A direct implementation computes
all the entries of the matrix and then uses linear algebra to compute its
determinant; using fast matrix arithmetic this requires $O(n^\omega)$
operations in $\K$, that is $\tilde{O}(n^{\omega+1})$ operations in
$\F$. This is at least cubic in $n$, and only a minor improvement over the
previously best-known approach of \citeN{Girstmair}. The main contribution
of this paper is to show how to speed up this verification.
 
Before entering that discussion, we briefly discuss the probability that
$\alpha$ be a normal element: if we write
$\alpha = a_0 + \cdots + a_{n-1} \xbar^{n-1}$, the determinant of
$\mat M_G(\alpha)$ is a (not identically zero) homogeneous polynomial of
degree $n$ in $(a_0,\dots,a_{n-1})$. If the $a_i$'s are chosen uniformly at
random in a finite set $X \subset \F$, the Lipton-DeMillo-Schwartz-Zippel
implies that the probability that $\alpha$ be normal is at least $1-n/|X|$.

If $G$ is cyclic, \cite{GatGie90} avoid computing a determinant by
computing the GCD of $S_\alpha := \sum_{i = 0}^{n-1} g_i(\alpha)x^i$
and $x^n-1$. In effect, this amounts to testing whether $S_\alpha$ is
invertible in the group ring $\K[G]$, which is isomorphic to
$\K[x]/\langle x^n-1\rangle$. This is a general fact: for any $G$,
$\mat M_G(\alpha)$ is the matrix of (left) multiplication by the orbit
sum
$$S_\alpha:= \sum_{g \in G} g(\alpha)g \in \K[G],$$ and $\alpha$ being
normal is equivalent to $S_\alpha$ being a unit in $\K[G]$. This point of
view may make it possible to avoid linear algebra of size $n$ over $\K$,
but writing $S_\alpha$ itself still involves $\Theta(n^2)$ elements in
$\F$. The following lemma is the main new ingredient in our algorithm: it
gives a randomized reduction to testing whether a suitable projection of
$S_\alpha$ in $\F[G]$ is a unit.
 
%% This gives an idea to modify Algorithm \ref{Alg:Naive}. Instead of
%% writing $M_G(x)$ and test its invertiblity, we can write $\osum{G}{\K}$
%% and test if it is invertible in $\K[G]$. Although testing the
%% invertibility of $\osum{G}{\K}$ might be efficient in comparison to
%% computing the determinant of a matrix in $\K$, we prefer to do the
%% computations over $\F$ rather than $\K$. The following lemma comes handy
%% to pass the computations from $\K$ to $\F$.

\begin{lemma}
  \label{Lem:Proj}
  For $\alpha \in \K$, $\mat M_G(\alpha)$ is invertible if and only
  if $$\ell(\mat M_G(\alpha)) := [\ell(g_ig_j(\alpha))]_{ij} \in M_n(\F)$$
  is invertible, for a generic $\F$-linear projection $\ell: \K \to \F$.
\end{lemma}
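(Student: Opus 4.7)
The plan is to hinge both directions on the well-established fact, recalled just before the lemma, that $\mat M_G(\alpha)$ is invertible iff $\alpha$ is normal, i.e., iff the Galois conjugates $g_1(\alpha),\dots,g_n(\alpha)$ form an $\F$-basis of $\K$. Once this is in hand, the equivalence splits cleanly into an easy ``contrapositive'' direction and a genericity direction that is pinned down by an explicit witness.

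First, I would dispose of the direction ``$\mat M_G(\alpha)$ singular $\Rightarrow$ $\ell(\mat M_G(\alpha))$ singular for every $\ell$'' by contrapositive on normality. If $\alpha$ is not normal, there exist $c_1,\dots,c_n\in\F$, not all zero, with $\sum_j c_j g_j(\alpha)=0$. Applying each $g_i$ (an $\F$-automorphism) gives $\sum_j c_j g_i g_j(\alpha)=0$, so the columns of $\mat M_G(\alpha)$ satisfy this relation with $\F$-coefficients. Since the $c_j$ lie in $\F$ and $\ell$ is $\F$-linear, applying $\ell$ entrywise preserves the relation, so $\ell(\mat M_G(\alpha))$ is singular for every $\ell$, in particular for a generic one.

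For the converse, my approach is to exhibit a single $\ell_0$ for which $\ell_0(\mat M_G(\alpha))$ is invertible, and then promote this to a genericity statement. Suppose $\alpha$ is normal and let $e\in G$ denote the identity. Since $G\cdot\alpha$ is an $\F$-basis of $\K$, there is a unique $\F$-linear form $\ell_0\colon\K\to\F$ with $\ell_0(e(\alpha))=1$ and $\ell_0(g(\alpha))=0$ for $g\neq e$. Then
\[
\ell_0\bigl(g_i g_j(\alpha)\bigr)=
\begin{cases} 1 & \text{if } g_j=g_i^{-1},\\ 0 & \text{otherwise,}\end{cases}
\]
so $\ell_0(\mat M_G(\alpha))$ is a permutation matrix, hence invertible.

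To upgrade this to ``generic $\ell$'', I would parametrize $\ell$ by its coefficients $(\ell_1,\dots,\ell_n)$ in the dual of any fixed $\F$-basis of $\K$; each entry $\ell(g_i g_j(\alpha))$ is then an $\F$-linear form in the $\ell_k$, so $\det\ell(\mat M_G(\alpha))$ is a polynomial of total degree $n$ in $(\ell_1,\dots,\ell_n)$. By the preceding paragraph this polynomial is not identically zero (it does not vanish at the coordinates of $\ell_0$), so it is nonzero on a Zariski-dense open subset of linear forms, which is exactly what is meant by ``generic $\ell$'' here. I do not anticipate a real obstacle: the only non-routine step is spotting the dual-basis choice that turns $\ell(\mat M_G(\alpha))$ into a permutation matrix, after which both directions and the Zariski argument are essentially mechanical.
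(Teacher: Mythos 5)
Your proof is correct, but it takes a genuinely different route from the paper's. You route both implications through the classical equivalence ``$\alpha$ normal $\iff$ $\mat M_G(\alpha)$ invertible'' (equivalently, $S_\alpha$ a unit in $\K[G]$), which the paper indeed recalls just before the lemma but does not prove; in particular, your singular direction outsources the one delicate point---passing from singularity over $\K$ to an $\F$-rational linear dependence among the conjugates---to that classical fact (Dedekind/Artin independence of characters, cf.\ \cite[Theorem 6.13.1]{Lang}). The paper instead proves this from scratch: it takes a kernel vector over $\K$ with a minimal number of nonzero entries, normalizes one entry to $1$, and uses the fact that $G$ permutes the rows of $\mat M_G(\alpha)$ to show the vector is $G$-invariant, hence lies in $\F^n$, after which every projection $\ell$ kills it; this is the part credited to \cite[Lemma~4]{Jam18}. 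For the invertible direction, the paper regards $\det \ell(\mat M_G(\alpha))$ as a polynomial in the values $L_k=\ell(\xbar^k)$ and evaluates it at the $\K$-point $(1,\xbar,\dots,\xbar^{n-1})$, where it equals $\det \mat M_G(\alpha)\neq 0$, whereas you exhibit the explicit $\F$-rational witness $\ell_0$ dual to the normal basis, at which the projected matrix is a permutation matrix with determinant $\pm 1$. Your witness is more concrete and arguably more illuminating; the paper's proof is more self-contained, since it never invokes the nontrivial direction of the normality criterion, only the easy observations made in its preliminaries. The final genericity step (a nonzero polynomial over a characteristic-zero field is nonzero at a generic point) is the same in both arguments.
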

\begin{proof}
  $(\Rightarrow)$ For a fixed $\alpha\in\K$, any entry of
  $\mat M_G(\alpha)$ can be written as
  \begin{equation}\label{Eq:PrimElm}
    \sum_{k= 0}^{n-1} a_{ijk}\xbar^k,
  \end{equation}
  and for $\ell: \K \to \F$, the corresponding entry in $\ell(\mat
  M_G(\alpha))$ can be written $\sum_{k= 0}^{n-1} a_{ijk}\ell_k$, with
  $\ell_k = \ell(\xbar^k)$. Replacing these $\ell_k$'s by
  indeterminates $L_k$'s, the determinant becomes a polynomial in $P
  \in \F[L_1, \ldots, L_n].$ Viewing $P$ in $\K[L_1, \ldots, L_n]$, we
  have $ P(1, \xbar, \ldots, \xbar^{n-1})$ $= \det(\mat M_G(\alpha))$,
  which is non-zero by assumption. Hence, $P$ is not identically zero,
  and the conclusion follows.
  
  $(\Leftarrow)$ Assume $\mat M_G(\alpha)$ is not invertible. Following the
  proof of \cite[Lemma 4]{Jam18}, we first show that there exists a
  non-zero $\boldsymbol{u} \in \F^n$ in the kernel of $\mat M_G(\alpha)$.
  
  The elements of $G$ act on rows of $\mat M_G(\alpha)$ entrywise and the
  action permutes the rows the matrix. Assume
  $\varphi : G \to \mathfrak{S}_n$ is the group homomorphism such that
  $g(\mat M_i) = \mat M_{\varphi(g)(i)}$ for all $i$, where $\mat M_i$ is
  the $i$-th row of $\mat M_G(\alpha)$.
  
  Since $\mat M_G(\alpha)$ is singular, there exists a non-zero
  $\boldsymbol{v} \in \K^n$ such that $\mat M_G(\alpha)\boldsymbol{v} = 0$;
  we choose $\boldsymbol{v}$ having the minimum number of non-zero
  entries. Let $i \in \lbrace 1, \ldots , n \rbrace$ such that
  $v_i \neq 0$. Define $\boldsymbol{u} = 1/v_i\boldsymbol{v}$. Then,
  $\mat M_G(\alpha)\boldsymbol{u} = 0,$ which means
  $\mat M_j \boldsymbol{u} = 0 $ for $j \in \lbrace 1, \ldots, n
  \rbrace$. For $g \in G$, we have
  $g(\mat M_j \boldsymbol{u}) = \mat M_{\varphi(g)(j)} g(\boldsymbol{u})=
  0.$ Since this holds for any $j$, we conclude that
  $\mat M_G(\alpha)g(\boldsymbol{u})= 0$, hence
  $g(\boldsymbol{u})-\boldsymbol{u}$ is in the kernel of
  $\mat M_G(\alpha)$. On the other hand since the $i$-th entry of
  $\boldsymbol{u}$ is one, the $i$-th entry of
  $g(\boldsymbol{u}) -\boldsymbol{u}$ is zero. Thus the minimality
  assumption on $\textbf{v}$ shows that
  $g(\boldsymbol{u}) -\boldsymbol{u} = 0$, equivalently
  $g(\boldsymbol{u})=\boldsymbol{u}$, and hence $\boldsymbol{u} \in \F^n$.
  
  Now we show that $\ell(\mat M_G(\alpha))$ is not invertible for all
  choices of $\ell$. By Equation \eqref{Eq:PrimElm}, we can write
  $$\mat M_G(\alpha) = \sum_{j = 0}^{n-1} \mat M^{(j)} \xbar^j, \quad 
  \mat M^{(j)} \in M_{n}(\F) \text{~for all $j$}.$$ 
  Since $\boldsymbol{u}$ has entries in $\F$,
  $\mat M_G(\alpha) \boldsymbol{u} =0$ yields
  $\mat M^{(j)}\boldsymbol{u} = 0$ for
  $j \in \lbrace 1, \ldots , n \rbrace$. Hence,
$$\sum_{j = 0}^{n-1} \mat M^{(j)} \ell_j \boldsymbol{u} = 0$$ for any 
$\ell_j$'s in $\F$, and $\ell(\mat M_G(\alpha))$ is not invertible for any~$\ell$.
\end{proof} 
Our algorithm can be sketched as follows: choose random
$\alpha$ in $\K$ and $\ell: \K\to\F$, and let
\begin{equation}\label{def:s_alpha_ell}
s_{\alpha,\ell}:=\sum_{g \in G} \ell(g(\alpha))g \in \F[G].  
\end{equation}
 The
matrix $\ell(\mat M_G (\alpha))$ is the multiplication matrix by
$s_{\alpha,\ell}$ in $\F[G]$, so once $s_{\alpha,\ell}$ is known, we
are left with testing whether it is a unit in $\F[G]$.  In the next
two sections, we address the respective questions of computing
$s_{\alpha,\ell}$, and testing its invertibility in $\F[G]$.

%%% Local Variables:
%%% mode: latex
%%% TeX-master: "NormalBasisCharZero"
%%% End:

\section{Computing projections of the orbit sum}
\label{sec:osum}

In this section we present algorithms to compute $s_{\alpha,\ell}$,
when $G$ is either abelian or metacyclic. We start by sketching our
ideas in simplest case, cyclic groups.  We will see that they follow
closely ideas used in \cite{KalSho98} over finite fields.

Suppose $G = \langle g \rangle$, so that given $\alpha$ in $\K$ and
$\ell: \K \to \F$, our goal is to compute
\begin{equation}
  \label{eq:cycproj}
  \ell(g^i(\alpha)), ~~\mbox{for}~ 0\leq i\leq n-1.
\end{equation}
\citeN{KalSho98} call this the \emph{automorphism projection problem} and
gave an algorithm to solve it in subquadratic time, when $g$ is the
$q$-power Frobenius $\mathbb{F}_{q^n} \to \mathbb{F}_{q^n}$.  The key idea in their
algorithm is to use the baby-steps/giant-steps technique: for a suitable
parameter $t$, the values in \eqref{eq:cycproj} can be rewritten as
\[
  (\ell \circ g^{tj})(g^i(\alpha)), ~~\mbox{for}~ 0 \leq j < m:=\lceil n/t
  \rceil ~\mbox{and}~ 0 \leq i <t.
\]
First, we compute all $G_i:=g^i(\alpha)$ for $0 \leq i <t$.  Then we compute
all $L_j:=\ell \circ g^{tj}$ for $0 \leq j <m$, where the $L_j$'s are
themselves linear mappings $\K \to \F$.  Finally, a matrix product yields
all values $L_j(G_i)$.

The original algorithm of \citeN{KalSho98} relies on the properties of the
Frobenius mapping to achieve subquadratic runtime. In our case, we cannot
apply these results directly; instead, we have to revisit the proofs
of~\citeN{KalSho98}, Lemmata 3, 4 and~8, now considering rectangular matrix
multiplication.  Our exponents involve the constant $\omega(4/3)$, for
which we have the upper bound $\omega(4/3) < 2.654$: this follows from the
upper bounds on $\omega(1.3)$ and $\omega(1.4)$ given by~\citeN{LeGall}, and
the fact that $k \mapsto \omega(k)$ is convex~\cite{LoRo83}. In particular,
$3/4 \cdot \omega(4/3) < 1.99$. Note also the inequality
$\omega(k) \ge 1+k$ for $k\ge 1$, since $\omega(k)$ describes products with
input and output size $O(n^{1+k})$.

%%%%%%%%%%%%%%%%%%%%%%%%%%%%%%%%%%%%%%%%%%%%%%%%%%%%%%%%%%%%

\subsection{Multiple automorphism evaluation and applications}

The key to the algorithms below is the remark following
Assumption~\ref{assum}, which reduces automorphism evaluation to
modular composition of polynomials.  Over finite fields, this idea goes back
to~\citeN{GaSh92}, where it was credited to Kaltofen.

For instance, given $g \in G$ (by means of $\gamma:=g(\xbar)$), we can
deduce $g^2 \in G$ (again, by means of its image at $\xbar$) as
$\gamma(\gamma)$; this can be done with $\tilde{O}(n^{(\omega+1)/2})$
operations in $\F$ using Brent and Kung's modular composition
algorithm~\cite{BrKu78}. The algorithms below describe similar operations
along these lines, involving several simultaneous evaluations.

\begin{lemma}
  \label{lem:modcom}
  Given $\alpha_1,\dots,\alpha_s$ in $\K$ and $g$ in $G =
  \mathrm{Gal}(\K/\F)$, with $s = O(\sqrt{n})$, we can compute
  $g(\alpha_1),\dots,g(\alpha_s)$ with $\tilde
  O(n^{(3/4)\cdot\omega(4/3)})$ operations in $\F$.
\end{lemma}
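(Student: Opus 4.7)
The plan is to reduce the $s$ evaluations $g(\alpha_i)$ to simultaneous modular polynomial compositions $\alpha_i(\gamma) \bmod P$, using $\gamma := g(\xbar)$ (per the remark following Assumption~\ref{assum}), and then to bundle all $s$ compositions into a single rectangular matrix product whose shape is tuned precisely to match the $\omega(4/3)$ format of the target bound.

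Concretely, I would apply the baby-step/giant-step strategy of Brent and Kung~\cite{BrKu78} with parameter $t$ and $m := \lceil n/t \rceil$. The first step is to compute the baby steps $\gamma^0, \gamma^1, \ldots, \gamma^{t-1}$ and the giant steps $\gamma^{t}, \gamma^{2t}, \ldots, \gamma^{(m-1)t}$, each reduced modulo $P$. Each $\alpha_i$ is then decomposed in base $x^t$ as $\alpha_i = \sum_{j=0}^{m-1} \alpha_{i,j}(x)\,x^{jt}$ with $\deg \alpha_{i,j} < t$, so that
$$g(\alpha_i) \;=\; \sum_{j=0}^{m-1} \alpha_{i,j}(\gamma)\,\gamma^{jt}.$$
The bulk of the work is obtaining all $sm$ elements $\alpha_{i,j}(\gamma) \in \K$ simultaneously via one matrix product $AB$, where $A \in M_{sm,t}(\F)$ stacks the coefficient vectors of the $\alpha_{i,j}$ and $B \in M_{t,n}(\F)$ stacks the coefficient vectors of $\gamma^0, \ldots, \gamma^{t-1}$; each row of $AB$ is then the coefficient vector in $\F^n$ of one $\alpha_{i,j}(\gamma) \in \K$. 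A final round of $sm$ multiplications in $\K$ assembles the $g(\alpha_i)$.

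The crucial tuning is to choose $t$ so that $sm = t$, i.e.\ $t = \sqrt{sn}$: with $s = \Theta(\sqrt n)$ this gives $t = sm = \Theta(n^{3/4})$, so $AB$ has shape $(N,N) \times (N, N^{4/3})$ with $N = n^{3/4}$ and can be computed in $O(n^{(3/4)\omega(4/3)})$ operations in $\F$. (When $s$ is strictly smaller than $\sqrt n$, I would pad with dummy $\alpha_i$'s to reach $s = \lceil \sqrt n\rceil$, which does not change the bound.) Every other step is dominated: baby-step precomputation costs $\tilde{O}(tn) = \tilde{O}(n^{7/4})$, giant steps cost $\tilde{O}(mn) = \tilde{O}(n^{5/4})$, and the final recombination costs $\tilde{O}(smn) = \tilde{O}(n^{7/4})$; the lower bound $\omega(k) \ge 1 + k$ recalled at the end of the introduction gives $(3/4)\omega(4/3) \ge 7/4$, so all of these fit inside $\tilde{O}(n^{(3/4)\omega(4/3)})$. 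The main subtlety will be the rectangular-product bookkeeping: the classical single-composition Brent-Kung cut $t = \sqrt n$ is the wrong choice here because it ignores the amortization across the $s$ evaluations; only the balance $sm = t$ yields a (nearly) square left factor, which is what is needed to invoke the $\omega(4/3)$ bound directly without any further splitting.
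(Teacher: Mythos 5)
Your proposal is correct and follows essentially the same route as the paper's proof: the baby-step/giant-step split of each $\alpha_i$ in base $x^t$, a single $(n^{3/4}\times n^{3/4})$ by $(n^{3/4}\times n)$ coefficient-matrix product invoked at exponent $\omega(4/3)$, and a recombination step bounded via $(3/4)\cdot\omega(4/3)\ge 7/4$. The only differences are cosmetic — the paper fixes $t=\lceil n^{3/4}\rceil$ directly instead of deriving it from the balance $sm=t$, and recombines with Horner's scheme in $\gamma^t$ rather than precomputing all giant steps — neither of which affects the cost.
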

\begin{proof}
(Compare \cite[Lemma~3]{KalSho98}) As noted above, for $i\le s$,
  $g(\alpha_i) = \alpha_i(\gamma)$, with $\gamma := g(\xbar) \in \K$.
  Let $t := \lceil n^{3/4} \rceil$, $m:=\lceil n/t\rceil$, and rewrite $\alpha_1 , \ldots , \alpha_s$ as 
$$\alpha_i = \sum_{0 \leq j < m} a_{i,j}\xbar^{tj},$$ where the
  $a_{i,j}$'s are polynomials of degree less than $t$. The next step
  is to compute $\gamma_i := \gamma^i$, for $i = 0 , \ldots , t$.
  There are $t$ products in $\K$ to perform, so this amounts to
  $\tilde{O}(n^{7/4})$ operations in $\F$.

  Having $\gamma_i$'s in hand, one can form the matrix
  $\boldsymbol{\Gamma} := \left[ \Gamma_0 ~ \cdots ~ \Gamma_{t-1}
    \right]^T$, where each column $\Gamma_i$ is the coefficient vector
  of $\gamma_i$ (with entries in $\F$); this matrix has $t \in
  O(n^{3/4})$ rows and $n$ columns. We also form
  $$\mat A := \left[{A}_{1,0} \cdots {A}_{1,m-1} \cdots
    {A}_{s,0} \cdots {A}_{s,m-1}\right]^T,$$ where
  ${A}_{i,j}$ is the coefficient vector of $a_{i,j}$. This matrix 
  has $s m \in O(n^{3/4})$ rows and $t \in O(n^{3/4})$ columns.

  Compute $\mat B:=\mat A\, \boldsymbol{\Gamma}$; as per our
  definition of exponents $\omega(\cdot )$, this can be done in
  $O(n^{(3/4)\cdot \omega(4/3)})$ operations in $\F$, and the rows of this matrix
  give all $a_{i,j}(\gamma)$.  The last step to get all
  $\alpha_i(\gamma)$ is to write them as $\alpha_i(\gamma) = \sum_{0
    \leq j < m} a_{i,j}(\gamma) \gamma_t^{j}.$ Using Horner's scheme,
  this takes $O(sm)$ operations in $\K$, which is $\tilde{O}(n^{7/4})$
  operations in $\F$. Since we pointed out that $\omega(3/4) \ge 7/4$,
  the leading exponent in all costs seen so far is
  $(3/4)\cdot\omega(4/3)$.
\end{proof}

\begin{lemma}\label{lem:selfcomp}
Given $\alpha$ in $\K$, $g_1, \ldots , g_{r}$ in $G =
\mathrm{Gal}(\K/\F)$ and positive integers $(s_1, \ldots s_r)$ such
that $\prod_{i = 1}^r s_i = O(\sqrt{n})$ and $r \in O(\log(n))$, all
  $$g_1^{i_1}\cdots g_r^{i_r}(\alpha) ,\quad \text{~for~} 0 \leq i_j
\leq s_j,\ 1 \leq j \leq r$$ can be computed in $\osumcosttilde$
operations in $\F$.
\end{lemma}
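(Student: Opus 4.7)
The strategy is to process the generators iteratively, computing in stage $j$ the set
$S_j := \{g_1^{i_1}\cdots g_j^{i_j}(\alpha) : 0 \le i_k \le s_k,\ k \le j\}$, of size $N_j := \prod_{k \le j}(s_k+1) = \tilde{O}(\sqrt{n})$ under our hypotheses. Stage $j$ takes $S_{j-1}$ and, for each $\beta \in S_{j-1}$, adjoins $g_j^{i_j}(\beta)$ for $1 \le i_j \le s_j$. The naive implementation---apply $g_j$ to the most recent layer iteratively, invoking Lemma~\ref{lem:modcom} once per power---would cost $s_j \cdot \tilde{O}(n^{(3/4)\omega(4/3)})$ per stage, which is far too expensive when some $s_j$ approaches $\sqrt{n}$. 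I would instead use a doubling strategy inside each stage.

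In detail, within stage $j$, let $K_j := \lceil \log_2(s_j+1)\rceil$ and maintain the invariant that after level $k \le K_j$ we know $g_j^i(\beta)$ for all $\beta \in S_{j-1}$ and all $0 \le i \le \min(2^k,s_j)$. To pass from level $k$ to level $k+1$: (i) compute the automorphism $g_j^{2^k}$ (represented by $g_j^{2^k}(\xbar) \in \K$) by one modular composition of $g_j^{2^{k-1}}$ with itself, at cost $\tilde{O}(n^{(\omega+1)/2})$; (ii) apply the single automorphism $g_j^{2^k}$ to the $2^k N_{j-1}$ elements $\{g_j^i(\beta) : 1 \le i \le 2^k,\ \beta \in S_{j-1}\}$, splitting them into $\lceil 2^k N_{j-1}/\sqrt{n}\rceil$ batches of size at most $\sqrt{n}$ and handling each batch by one invocation of Lemma~\ref{lem:modcom} at cost $\tilde{O}(n^{(3/4)\omega(4/3)})$.

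For the cost, the modular compositions in step (i), summed over all levels and stages, contribute only $\tilde{O}(n^{(\omega+1)/2})$, which is absorbed in the main term. In step (ii), within stage $j$, the levels with $2^k N_{j-1} \le \sqrt{n}$ each use a single batch, contributing $O(\log n) \cdot \tilde{O}(n^{(3/4)\omega(4/3)})$; the remaining levels form a geometric progression in $k$, whose total number of batches telescopes to $O(N_j/\sqrt{n})$, hence to $\tilde{O}(1)$ batches since $N_j = \tilde{O}(\sqrt{n})$. Each stage therefore costs $\tilde{O}(n^{(3/4)\omega(4/3)})$, and summing over the $r = O(\log n)$ stages yields the stated bound.

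The main obstacle is the potentially large value of an individual $s_j$: applying Lemma~\ref{lem:modcom} once per power of $g_j$ would be off by a factor of up to $\sqrt{n}$. The doubling trick resolves this because the batch sizes $2^k N_{j-1}$ grow geometrically, so the cumulative work over all levels of a stage is dominated by the final batch size, which is bounded by the total stage-$j$ output $N_j = \tilde{O}(\sqrt{n})$ and thus fits within the budget.
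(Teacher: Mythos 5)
Your proof is correct and follows essentially the same route as the paper: process the generators one at a time and, within each generator, double the range of exponents by applying the single automorphism $g_j^{2^k}$ (obtained by repeated modular composition, at negligible cost) to all previously computed elements via Lemma~\ref{lem:modcom}, for $O(\log s_j)$ doubling steps per generator and $r=O(\log n)$ generators. Your explicit splitting into batches of size at most $\sqrt{n}$ is only a cosmetic refinement, since the hypothesis $\prod_i s_i = O(\sqrt{n})$ already keeps each doubling step within the size bound of Lemma~\ref{lem:modcom}, which is how the paper argues it.
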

\begin{proof}
(Compare \cite[Lemma~4]{KalSho98}.) For a given  $m\in\{1,\dots,r\}$, suppose we have computed 
  $$G_{i_1,\dots,i_m}:=g_m^{i_m}\cdots g_1^{i_1}(\alpha)$$ for $0 \leq
  i_j \leq s_j$ if $1 \leq j < m$, and $0 \leq i_m < k_m,$ as well as
  the automorphism $\eta:={g_m}^{k_m}$ (by means of its value at $\xbar$, as per our convention).
  
 Then, we can obtain $G_{i_1,\dots,i_m}$ for $0 \leq i_j \leq s_j$ if $1
 \leq j < m$, and $0 \leq i_m < 2k_m$, by computing
 $\eta(G_{i_1,\dots,i_m})$, for all indices $i_1,\dots,i_m$ available
 to us, that is, $0 \leq i_j \leq s_j$ if $1 \leq j < m$, and $0 \leq
 i_m < k_m$. This can be carried out using $\osumcosttilde$ operations
 in $\F$ by applying Lemma \ref{lem:modcom}. Prior to entering the
 next iteration, we also compute $\eta^2$ by means of one modular
 composition, whose cost is negligible. 

 Using the above doubling method for $g_m$, we have to do $O(\log
 s_m)$ steps, for a total cost of $\osumcosttilde$ operations in $\F$.  We
 repeat this procedure for $m=1,\dots,r$; since $r$ is in $O(\log(n))$,
 the cost remains $\osumcosttilde$.
\end{proof}

We now present dual versions of the previous two lemmas (our
reference~\cite{KalSho98} also had such a discussion). Seen as an
$\F$-linear map, the operator $g:\alpha \mapsto g(\alpha)$ admits a
transpose, which maps an $\F$-linear form $\ell:\K\to\F$ to the
$\F$-linear form $\ell \circ g: \alpha \mapsto \ell(g(\alpha))$.  The
{\em transposition principle}~\cite{KaKiBs88,CaKaYa89} implies that if
a linear map $\F^N \to \F^M$ can be computed in time $T$, its
transpose can be computed in time $T+O(N+M)$. In particular, given $s$
linear forms $\ell_1,\dots,\ell_s$ and $g$ in $G$, transposing
Lemma~\ref{lem:modcom} shows that we can compute $\ell_1 \circ
g,\dots,\ell_s \circ g$ in time $\osumcosttilde$. The following lemma
sketches the construction.

\begin{lemma}
  \label{lem:modcomT}
  Given $\F$-linear forms $\ell_1,\dots,\ell_s:\K\to \F$ and $g$ in $G =
  \mathrm{Gal}(\K/\F)$, with $s = O(\sqrt{n})$, we can compute
  $\ell_1\circ g,\dots,\ell_s \circ g$ in time $\tilde
  O(n^{{3}/{4}\omega({4}/{3})})$.
\end{lemma}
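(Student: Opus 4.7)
My plan is to obtain Lemma \ref{lem:modcomT} as the transpose of Lemma \ref{lem:modcom}, via the transposition principle \cite{KaKiBs88,CaKaYa89} invoked just above. Fix $g \in G$, represented by $\gamma := g(\xbar) \in \K$, and consider the $\F$-linear map
$$
\Phi_g : \K^s \longrightarrow \K^s, \qquad (\alpha_1,\dots,\alpha_s) \longmapsto (g(\alpha_1),\dots,g(\alpha_s)).
$$
Using the power basis, $\Phi_g$ becomes an $\F$-linear map $\F^{sn} \to \F^{sn}$, which Lemma \ref{lem:modcom} evaluates in $\osumcosttilde$ operations in $\F$.

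The crucial preliminary observation is that, once $g$ (equivalently $\gamma$) is fixed, that algorithm is $\F$-linear in the inputs $\alpha_i$. Indeed, the precomputation of the powers $\gamma^0,\dots,\gamma^t$ depends only on $g$, so the matrix $\boldsymbol{\Gamma}$ formed from them is likewise independent of the $\alpha_i$'s; assembling the coefficient matrix $\mat A$ and then the product $\mat A\,\boldsymbol{\Gamma}$ is linear in $(\alpha_1,\dots,\alpha_s)$; and the final Horner combination $\alpha_i(\gamma) = \sum_j a_{i,j}(\gamma)\,\gamma^{tj}$ is linear in the already-linear quantities $a_{i,j}(\gamma)$. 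Hence $\Phi_g$ is computed by a linear straight-line program of length $\osumcosttilde$, which is exactly the form required to apply the transposition principle.

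Identifying $\F$-linear forms $\K \to \F$ with $\F^n$ via the basis dual to the power basis, the identity $\ell(g(\alpha)) = (\ell\circ g)(\alpha)$ shows that the transpose of $\Phi_g$ is precisely the map $(\ell_1,\dots,\ell_s) \mapsto (\ell_1\circ g,\dots,\ell_s\circ g)$ we wish to evaluate. The transposition principle then yields an algorithm in time $T + O(sn)$ with $T = \osumcosttilde$; since $s = O(\sqrt n)$ and $(3/4)\cdot\omega(4/3) \geq (3/4)(1 + 4/3) = 7/4 > 3/2$, the additive $O(sn) = O(n^{3/2})$ is absorbed, giving the claimed bound. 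I expect the only real sticking point to be the linearity verification above; it is routine, but if one prefers to avoid the principle as a black box, one can instead transpose Lemma \ref{lem:modcom} step by step (the product $\mat A\,\boldsymbol{\Gamma}$ becoming multiplication by $\boldsymbol{\Gamma}^{\!\top}$, and Horner becoming a reverse accumulation) to obtain an explicit dual algorithm with the same complexity.
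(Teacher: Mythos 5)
Your proposal is correct and takes essentially the same route as the paper: the paper also derives this lemma by transposing Lemma~\ref{lem:modcom} via the transposition principle (invoked in the paragraph preceding the statement), and its proof simply writes out the explicit dual algorithm---giant steps $\gamma^{tj}$, transposed multiplications producing the forms $L_{i,j}$, and the rectangular matrix product---that your closing remark alludes to. Your verification that the algorithm of Lemma~\ref{lem:modcom}, with $g$ fixed, is an $\F$-linear straight-line program in the $\alpha_i$'s, together with the observation that the transpose of $\alpha\mapsto g(\alpha)$ is $\ell\mapsto\ell\circ g$ and that the additive $O(sn)$ term is dominated, is exactly the justification needed.
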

\begin{proof}
  Given $\ell_i$ by its values on the power basis $1,\xbar,\dots,\xbar^{n-1}$, $\ell_i \circ g$ is represented by its values at
  $1,\gamma,\dots,\gamma^{n-1}$, with $\gamma := g(\xbar)$. 

  Let $t,m$ and $\gamma_0,\dots,\gamma_t$ be as in the proof of
  Lemma~\ref{lem:modcom}. Next, compute the ``giant steps''
  $\gamma_t^j = \gamma^{tj}$, $j=0,\dots,m-1$ and for $i=1,\dots,s$
  and $j=0,\dots,m-1$, deduce the linear forms $L_{i,j}$ defined by
  $L_{i,j}(\alpha) := \ell_i(\gamma^{tj}\alpha)$ for all $\alpha$ in
  $\K$. Each of them can be obtained by a {\em transposed
    multiplication} in time $\tilde{O}(n)$~\cite[Section~4.1]{Shoup},
  so that the total cost thus far is $\tilde{O}(n^{7/4})$.

  Finally, multiply the $(sm \times n)$ matrix with entries the
  coefficients of all $L_{i,j}$ (as rows) by the $(n \times t)$ matrix with
  entries the coefficients of $\gamma_0,\dots,\gamma_{t-1}$ (as columns) to
  obtain all values $\ell_i(\gamma^j)$, for $i=1,\dots,s$ an
  $j=0,\dots,n-1$.  This can be accomplished with
  $O(n^{(3/4)\cdot\omega(4/3)})$ operations in~$\F$.
\end{proof}

From this, we deduce the transposed version of Lemma~\ref{lem:selfcomp},
whose proof follows the same pattern.

\begin{lemma}
  \label{lem:transmodcomp}
  Given $\ell:\K\to F$, $g_1, \ldots , g_{r}$ in $G = \mathrm{Gal}(\K/\F)$
  and positive integers $(s_1, \ldots s_r)$ such that
  $\prod_{i = 1}^r s_i = O(\sqrt{n})$ and $r \in O(\log(n))$, all linear
  maps
  \[
    \ell \circ g_1^{i_1}\cdots g_r^{i_r} ,\quad \text{~for~} 0 \leq i_j
    \leq s_j,\ 1 \leq j \leq r,
  \]
  can be computed in $\osumcosttilde$ operations in $\F$.
\end{lemma}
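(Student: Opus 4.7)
The plan is to transport the doubling strategy from Lemma~\ref{lem:selfcomp} into the dual setting, replacing the use of Lemma~\ref{lem:modcom} by its transposed analog Lemma~\ref{lem:modcomT}. I will process the generators $g_1,\dots,g_r$ one at a time. At the stage where I have already produced all linear forms
$$L_{i_1,\dots,i_m} := \ell \circ g_1^{i_1}\cdots g_m^{i_m}$$
for $0\le i_j\le s_j$ when $j<m$ and $0\le i_m<k_m$, together with the automorphism $\eta := g_m^{k_m}$ represented by its value $\eta(\xbar)\in\K$, I will extend the range of $i_m$ from $[0,k_m)$ to $[0,2k_m)$ by applying $\eta$ simultaneously to all currently available $L_{i_1,\dots,i_m}$'s via Lemma~\ref{lem:modcomT}. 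Between doubling steps, I update $\eta\leftarrow\eta\circ\eta$ by one modular composition at negligible cost $\tilde O(n^{(\omega+1)/2})$. Once the range for $g_m$ has reached $0\le i_m\le s_m$, I move to $g_{m+1}$, whose initial width is $k_{m+1}=1$ and whose initial collection of forms is exactly the full set built so far.

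The key condition to check is the hypothesis of Lemma~\ref{lem:modcomT}, namely that at every invocation the number of input forms is $O(\sqrt n)$. But at any intermediate stage this number is bounded by $\prod_{j=1}^r(s_j+1)$, which is $O(\sqrt n)$ by assumption. Therefore every call to Lemma~\ref{lem:modcomT} costs $\osumcosttilde$ operations in $\F$.

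For the total count of calls, each generator $g_m$ contributes $O(\log s_m)=O(\log n)$ doubling steps, and there are $r\in O(\log n)$ generators, giving $O(\log^2 n)$ invocations of Lemma~\ref{lem:modcomT}. Multiplying by the per-call cost remains within $\osumcosttilde$, since logarithmic factors are absorbed by the soft-O notation. The initial form $\ell$ itself is given, so the base case requires no work.

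The only real obstacle is bookkeeping rather than novelty: one must confirm that the ``carry-over'' collection of forms when moving from $g_m$ to $g_{m+1}$ never exceeds the $O(\sqrt n)$ size constraint of Lemma~\ref{lem:modcomT}, which is immediate from the product bound on the $s_i$'s. No new ideas beyond the transposition of Lemma~\ref{lem:selfcomp} are required, and the final cost estimate $\osumcosttilde$ follows at once.
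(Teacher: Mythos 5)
Your proposal is correct and follows essentially the same route as the paper: the paper's proof of this lemma is exactly the doubling scheme of Lemma~\ref{lem:selfcomp} transposed, with each doubling step realized by a call to Lemma~\ref{lem:modcomT} and the cost analysis carried over verbatim. Your additional bookkeeping (the $O(\sqrt n)$ bound on the number of forms at every stage and the $O(\log^2 n)$ count of invocations) is just a more explicit rendering of what the paper leaves implicit.
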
 
\begin{proof}
(Compare \cite[Lemma~8]{KalSho98}.)
  We proceed as in Lemma~\ref{lem:selfcomp}. For $m=1,\dots,r$, assume
  we know $L_{i_1,\dots,i_m}:=\ell \circ (g_1^{i_1}\cdots g_m^{i_m}),$
  for $0 \leq i_j \leq s_j$ if $1 \leq j < m$, and $0 \leq i_m < k_m.$
  Using the previous lemma, we compute all $L_{i_1,\dots,i_m} \circ
  {g_m}^{k_m},$ which gives us $L_{i_1,\dots,i_m}$ for indices $0 \le
  i_m < 2k_m$. The cost analysis is as in Lemma~\ref{lem:selfcomp}.
\end{proof}

%% At this point we have enough tools to see how the computation is
%% done in the cyclic case. Moreover, we can use the above lemmas to
%% give an algorithm for a more general case, namely the abelian
%% case. With a little bit more work we can state an algorithm which
%% solves the automorphism projection problem when $G = \lbrace a^ib^j
%% \rbrace$ where $m \leq n$. As an specific case this solves the
%% problem for metacyclic groups.

%%%%%%%%%%%%%%%%%%%%%%%%%%%%%%%%%%%%%%%%%%%%%%%%%%%%%%%%%%%%

\subsection{Abelian Groups}
\label{ssec:proj_abelian}

The first main result in this section is the following proposition.
Assume $G$ is an abelian group presented as 
\[
  \langle g_1, \ldots , g_r: g_{1}^{e_1} = \cdots = g_{r}^{e_r} = 1
  \rangle,
\]
where $ e_i \in \mathbb{N}$ is the order of $g_i$ and $n = e_1 \cdots
e_r$.  Without loss of generality, we assume $e_i \ge 2$ for all $i$,
so that $r$ is in $O(\log n)$. Elements of $\F[G]$ are written as
polynomials $\sum_{i_1,\dots,i_r} c_{i_1,\dots,i_r}
{g_1}^{e_1} \cdots {g_r}^{e_r}$, with $0\le i_j < e_j$ for all~$j$.

\begin{proposition}\label{prop:abelian}
  Suppose that $G$ is abelian, with notation as above. For $\alpha$ in
  $\K$ and $\ell:\K\to\F$, $s_{\alpha,\ell} \in \F[G]$, as defined
  in~\eqref{def:s_alpha_ell}, is computable using $\osumcosttilde$
  operations in $\F$.
\end{proposition}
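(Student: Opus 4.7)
The plan is to extend the baby-steps/giant-steps strategy from the cyclic case to this multi-generator setting, exploiting that $G$ is abelian. Given parameters $s_1,\dots,s_r$ (to be chosen) and $m_j := \lceil e_j/s_j\rceil$, write each exponent $i_j \in \{0,\dots,e_j-1\}$ as $i_j = a_j + b_j s_j$ with $0 \le a_j < s_j$ and $0 \le b_j < m_j$. Since the $g_j$'s commute,
\[
  g_1^{i_1}\cdots g_r^{i_r}(\alpha) \;=\; \bigl(g_1^{b_1 s_1}\cdots g_r^{b_r s_r}\bigr)\bigl(g_1^{a_1}\cdots g_r^{a_r}(\alpha)\bigr),
\]
so the coefficient of $g_1^{i_1}\cdots g_r^{i_r}$ in $s_{\alpha,\ell}$ equals $L_{b_1,\ldots,b_r}(A_{a_1,\ldots,a_r})$, where $A_{a_1,\ldots,a_r} := g_1^{a_1}\cdots g_r^{a_r}(\alpha) \in \K$ and $L_{b_1,\ldots,b_r} := \ell \circ (g_1^{b_1 s_1}\cdots g_r^{b_r s_r})$ is an $\F$-linear form on $\K$.

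The central combinatorial task is to choose the $s_j$'s so that both $\prod_j s_j$ and $\prod_j m_j$ are $O(\sqrt{n})$; only then do Lemmas~\ref{lem:selfcomp} and~\ref{lem:transmodcomp} apply at the target cost. I would reorder the generators arbitrarily and let $j^\star$ be the smallest index with $\prod_{j \le j^\star} e_j > \sqrt{n}$, or $j^\star := r+1$ if no such index exists. Then take $s_j := e_j$ for $j < j^\star$, $s_j := 1$ for $j > j^\star$, and $s_{j^\star} := \max\bigl(1,\, \lfloor \sqrt{n} / \prod_{j < j^\star} e_j\rfloor\bigr)$. A direct calculation gives $\prod_j s_j \le \sqrt{n}$, while $\prod_j m_j = m_{j^\star}\cdot \prod_{j > j^\star} e_j = O(\sqrt{n})$, since $\prod_{j > j^\star} e_j < \sqrt{n}$ by choice of $j^\star$ and the ceiling in $m_{j^\star}$ contributes at most a constant factor. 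This concentration of the split at a single index is the key point: splitting every $e_j$ symmetrically would produce products that blow up exponentially in $r$ once the rounding is taken into account, so this is the main obstacle the proof must handle.

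With the $s_j$'s fixed, the rest is routine. I would first compute each giant-step generator $h_j := g_j^{s_j}$ by $O(\log e_j)$ modular compositions, for a total cost of $\tilde O(n^{(\omega+1)/2})$ that is absorbed by the final bound. Applying Lemma~\ref{lem:selfcomp} with $(g_1,\dots,g_r)$ and bounds $(s_1,\dots,s_r)$ then produces every $A_{a_1,\dots,a_r}$ in $\osumcosttilde$ operations, and applying Lemma~\ref{lem:transmodcomp} with $(h_1,\dots,h_r)$ and bounds $(m_1,\dots,m_r)$ produces every linear form $L_{b_1,\dots,b_r}$ in the same cost. Finally, I would pack the coefficients of the $\prod_j m_j$ forms $L_{b_1,\dots,b_r}$ on the dual of $1,\xbar,\dots,\xbar^{n-1}$ as the rows of a $(\prod_j m_j) \times n$ matrix and the coefficients of the $\prod_j s_j$ elements $A_{a_1,\dots,a_r}$ on the same basis as the columns of an $n \times (\prod_j s_j)$ matrix, and compute their product; every entry of the resulting matrix is a desired coefficient of $s_{\alpha,\ell}$. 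With both outer dimensions $O(\sqrt{n})$ and inner dimension $n$, this matrix product costs $\tilde O(n^{(\omega+1)/2})$, which is dominated by the two preceding steps.
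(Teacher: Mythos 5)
Your proposal is correct, and it follows the same overall strategy as the paper: reduce the problem to evaluating the projections $\ell\circ(g_1^{b_1s_1}\cdots g_r^{b_rs_r})$ (giant steps, via Lemma~\ref{lem:transmodcomp}) on the conjugates $g_1^{a_1}\cdots g_r^{a_r}(\alpha)$ (baby steps, via Lemma~\ref{lem:selfcomp}), and recover all coefficients of $s_{\alpha,\ell}$ by one rectangular matrix product of outer sizes $O(\sqrt n)$ and inner size $n$. The genuine difference is the choice of the split. The paper splits each cyclic factor symmetrically, taking $s_i=\lceil\sqrt{e_i}\rceil$ in every coordinate, whereas you concentrate the split at a single pivot index $j^\star$ (full baby steps before it, full giant steps after it, a balanced split only at $j^\star$). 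Your variant is in fact more robust: with the symmetric choice, the rounding loss $\lceil\sqrt{e_i}\rceil/\sqrt{e_i}$ can accumulate over the $r\in O(\log n)$ factors, and for groups with many small invariant factors (the extreme case being $(\Z/2\Z)^r$, where $\lceil\sqrt 2\rceil=2=e_i$) the product $\prod_i s_i$ is not $O(\sqrt n)$, so the hypotheses of Lemmas~\ref{lem:selfcomp} and~\ref{lem:transmodcomp} are not literally met; one must regroup or rebalance the generators to restore the bound, which is exactly what your single-pivot decomposition does, with the explicit estimates $\prod_j s_j\le\sqrt n$ and $\prod_j m_j\le 3\sqrt n$. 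The only other (harmless) divergence is the final step: you bound the $O(\sqrt n)\times n$ by $n\times O(\sqrt n)$ product by blocking, at cost $\tilde O(n^{(\omega+1)/2})$, while the paper invokes rectangular multiplication at cost $O(n^{(1/2)\cdot\omega(2)})$; both are dominated by $\osumcost$, since $\omega(4/3)\ge 7/3$ gives $(3/4)\cdot\omega(4/3)\ge 7/4>(\omega+1)/2$. So your argument establishes the stated cost $\osumcosttilde$, and arguably in a cleaner way than the symmetric split.
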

\begin{proof}
Our goal is to compute
\begin{equation}\label{eq:abelian}
  \ell (g_1^{i_1},  \ldots, g_r^{i_r}(\alpha)), \, 1 \leq j \leq r, 0 \leq i_j \leq e_j,
\end{equation}
where $\ell$ is an $\F$-linear projection $\K\to \F$.  For $ 1\leq i
\leq r$, define $s_i:=\lceil\sqrt{e_i} \rceil$. As we sketched in the
cyclic case, the elements in \eqref{eq:abelian} can be expressed as
$L_{j_1,\dots, j_r} (G_{i_1,\dots,i_r})$, 
for $1\leq m \leq r, 0\leq i_m < s_m, 0 \leq j_m < s_m$.
Here, $L_{j_1,\dots,j_r} :=\ell \circ (g_1^{s_1j_1} \cdots
g_r^{s_rj_s})$ are linear projections presented as row vectors and
$G_{i_1,\dots,i_r}:=g_1^{i_1} \cdots g_r^{i_r}(\alpha)$ are field
elements presented as column vectors. Then, all elements in
\eqref{eq:abelian} can be computed with the following steps, the sum of whose 
costs proves the proposition.

\smallskip\noindent \textbf{Step 1.} Apply Lemma \ref{lem:selfcomp} to get 
$$G_{i_1,\dots,i_r}=g_1^{i_1} \cdots g_r^{i_r}(\alpha), \,\, 1\leq m \leq r, 0\leq i_m < s_m,$$
with cost $\osumcosttilde$.

\smallskip\noindent\textbf{Step 2.} Compute all $g_i^{s_i}$, $i=1,\dots,r$;
this involves $O(\log(n))$ modular compositions, so the cost is negligible
compared to that of Step~1.

\smallskip\noindent\textbf{Step 3.} Use Lemma \ref{lem:transmodcomp} to compute 
$$L_{j_1,\dots,j_r} = \ell \circ (g_1^{s_1j_1} \cdots
g_r^{s_rj_s}), \,\, 1\leq m \leq r, 0 \leq j_m < s_m,$$
with cost $\osumcosttilde$

\smallskip\noindent\textbf{Step 4.} Multiply the matrix with rows the
coefficients of all $L_{j_1,\dots,j_r}$ by the matrix with columns the
coefficients of all $G_{i_1,\dots,i_r}$; this yields all required
values. We compute this product in $O(n^{(1/2)\cdot\omega(2)})$
operations in $\F$, which is in $\osumcost$.
\end{proof}

%%%%%%%%%%%%%%%%%%%%%%%%%%%%%%%%%%%%%%%%%%%%%%%%%%%%%%%%%%%%

\subsection{Metacyclic Groups}

A group $G$ is metacyclic if it has a normal cyclic subgroup $H$ such that
$G/H$ is cyclic; for instance, any group with a squarefree order is
metacyclic. See \cite[p.~88]{Johnson} or \cite[p.~334]{Curtis} for more
background. A metacyclic group can always be presented~as
\begin{equation}
  \label{eq:metacyclic}
  \langle \sigma,\tau: \sigma^m = 1,  \tau^s = \sigma^t, \tau^{-1}\sigma \tau = \sigma^r \rangle,
\end{equation}
for some integers $m,t,r,s$, with $r,t \leq m$ and
$r^s = 1 \bmod t, rt = t \bmod m$. For example, the dihedral group
$$D_{2m} = \langle \sigma,\tau: \sigma^m =1, \tau^2 = 1, \tau^{-1}
\sigma \tau = \sigma^{m-1} \rangle, $$ is metacyclic, with
$s=2$. Generalized quaternion groups, which can be presented as
$$Q_m = \langle \sigma,\tau: \sigma^{2m} =1, \tau^2 = \sigma^m,
\tau^{-1} \sigma \tau = \sigma^{2m-1} \rangle,$$ are metacyclic, with
$s=2$ as well.

Using the notation of~\eqref{eq:metacyclic}, $n=|G|$ is equal to $ms$, and
all elements in a metacyclic group can be presented uniquely as either
\begin{equation}\label{pres1}
\{\sigma^i \tau^j,\,\,\, 0\leq i \leq m-1,\ 0\leq j \leq s-1\}  
\end{equation}
or
\begin{equation}\label{pres2}
\{ \tau^j\sigma^i,\,\,\, 0\leq i \leq m-1,\ 0\leq j \leq s-1\}.
\end{equation}
Accordingly, elements in the group algebra $\F[G]$ can be written as 
either 
$$\sum_{\substack{i <m\\ j< s}} c_{i,j} \sigma^i \tau^j \quad\text{or}\quad
\sum_{\substack{i <m\\ j< s}} c'_{i,j} \tau^j \sigma^i.$$
Conversion between the two representations involves no operation in $\F$,
using the commutation relation $\sigma^k \tau^c = \tau^c \sigma^{kr^c}$
for $k,c \ge 0$.

\begin{proposition}
  Suppose that $G$ is metacyclic, with notation as above. For $\alpha$
  in $\K$ and $\ell:\K\to\F$, $s_{\alpha,\ell} \in \F[G]$ is
  computable using $\osumcosttilde$ operations in $\F$.
\end{proposition}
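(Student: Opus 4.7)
The plan is to imitate the abelian case (Proposition~\ref{prop:abelian}) via a two-variable baby-step/giant-step decomposition on the generators $\sigma,\tau$, with the only extra twist being that their non-commutativity is absorbed by the metacyclic relation $\sigma^k\tau^c = \tau^c\sigma^{kr^c}$ (which follows by induction from $\sigma\tau = \tau\sigma^r$). Set $u:=\lceil\sqrt m\rceil$ and $v:=\lceil\sqrt s\rceil$. Using presentation~\eqref{pres2}, the goal is to compute $\ell(\tau^j\sigma^i(\alpha))$ for all $0\le i<m$ and $0\le j<s$, which I decompose via $i=i_0+ui_1$, $j=j_0+vj_1$ with small ``baby'' indices $(i_0,j_0)$ and ``giant'' indices $(i_1,j_1)$.

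First, I would apply Lemma~\ref{lem:selfcomp} with $r=2$, generators $(\tau,\sigma)$ and bounds $(v-1,u-1)$ to obtain the field elements $G_{j_0,i_0} := \sigma^{i_0}\tau^{j_0}(\alpha)$; since $uv=O(\sqrt n)$, this costs $\osumcosttilde$. Second, I would precompute the automorphisms $\sigma^u$ and $\tau^v$ by modular composition (negligible cost) and apply Lemma~\ref{lem:transmodcomp} with generators $(\tau^v,\sigma^u)$ to obtain the linear forms $L_{j_1,i_1}:=\ell\circ\tau^{vj_1}\circ\sigma^{ui_1}$, again at cost $\osumcosttilde$. Third, I would stack the $L$'s as the rows of an $O(\sqrt n)\times n$ matrix and the $G$'s as the columns of an $n\times O(\sqrt n)$ matrix and multiply them to produce all values $L_{j_1,i_1}(G_{j_0,i_0})$; the cost $O(n^{(1/2)\omega(2)})$ lies within $\osumcost$, exactly as in Step~4 of Proposition~\ref{prop:abelian}.

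The last step is bookkeeping. Unwinding the definitions and applying $\sigma^k\tau^c=\tau^c\sigma^{kr^c}$,
\[
L_{j_1,i_1}(G_{j_0,i_0}) \;=\; \ell\bigl(\tau^{vj_1}\sigma^{ui_1+i_0}\tau^{j_0}(\alpha)\bigr) \;=\; \ell\bigl(\tau^{vj_1+j_0}\,\sigma^{(ui_1+i_0)r^{j_0}}(\alpha)\bigr).
\]
From $\tau^s=\sigma^t$ (so $\tau^s$ commutes with $\sigma$) and $\tau^{-1}\sigma\tau=\sigma^r$ one deduces $r^s\equiv 1\pmod m$, so $r^{j_0}$ is a unit mod $m$. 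Consequently, for each fixed $j_0$, the map $(i_0,i_1)\mapsto (ui_1+i_0)r^{j_0}\bmod m$ is a bijection onto $\mathbb{Z}/m\mathbb{Z}$, and $(j_0,j_1)\mapsto j_0+vj_1$ is a bijection onto $[0,s)$. Thus the quadruples $(i_0,i_1,j_0,j_1)$ enumerate every element $\tau^j\sigma^i\in G$ exactly once; $s_{\alpha,\ell}$ is read off with no further arithmetic, and the conversion between presentations~\eqref{pres1} and~\eqref{pres2} is free.

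The main obstacle, compared to the abelian case, is precisely this non-commutativity bookkeeping: $L_{j_1,i_1}(G_{j_0,i_0})$ is \emph{not} literally $\ell(\sigma^{i_0+ui_1}\tau^{j_0+vj_1}(\alpha))$ but a $j_0$-twisted variant of the $\sigma$-exponent, and it is the unit property of $r$ modulo $m$ (a consequence of the metacyclic relations) that rescues the needed bijection. Everything else is a straightforward transcription of the abelian proof.
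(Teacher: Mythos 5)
Your argument is correct, and it reaches the paper's cost bound $\osumcosttilde$, but it is organized differently from the paper's proof. The paper avoids the non-commutativity issue altogether by splitting only the $\sigma$-exponent into baby/giant parts: assuming $s\le m$ (otherwise it switches to presentation~\eqref{pres2} and swaps roles), it takes baby steps $G_{i,j}=\sigma^i\tau^j(\alpha)$ with $0\le i<\lceil\sqrt{m/s}\rceil$ and the \emph{full} range $0\le j<s$, and giant-step linear forms $L_k=\ell\circ\sigma^{k\lceil\sqrt{m/s}\rceil}$ involving powers of $\sigma$ only; then $L_k(G_{i,j})=\ell(\sigma^{k\lceil\sqrt{m/s}\rceil+i}\tau^j(\alpha))$ directly, with no twisting of exponents, at the price of a case distinction on whether $s\le m$ or $m\le s$. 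You instead split both exponents ($u\approx\sqrt m$, $v\approx\sqrt s$) and absorb the resulting non-commutativity through the relation $\sigma^k\tau^c=\tau^c\sigma^{kr^c}$ together with the observation that $r^s\equiv 1\pmod m$ makes $r$ a unit modulo $m$, so the twisted exponent map still covers all of $\mathbb{Z}/m\mathbb{Z}$; this removes the case distinction but adds the bookkeeping you describe. Both uses of Lemmas~\ref{lem:selfcomp} and~\ref{lem:transmodcomp} and the final $O(\sqrt n)\times n$ by $n\times O(\sqrt n)$ product match the paper's Step 1--Step 4 structure and cost, as in Proposition~\ref{prop:abelian}. One cosmetic imprecision: because of the ceilings, the quadruples $(i_0,i_1,j_0,j_1)$ need not enumerate each group element \emph{exactly} once (when $u\nmid m$ or $v\nmid s$ some elements recur); what your argument actually establishes, and all that is needed, is surjectivity onto $G$ with only $O(n)$ computed values, so the cost claim stands.
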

\begin{proof}
  Suppose first that $s \le m$; then, we use the
  presentation~\eqref{pres1} of the elements of $G$. Take $\alpha$ in
  $\K$ and $\ell:\K \to \F$; the goal is to compute
  $\ell(\sigma^i\tau^j (\alpha))$, for all $0\leq i < m$ and $0 \leq j
  <s.$ This is accomplished with the following steps.

\smallskip\noindent\textbf{Step 1.} Apply Lemma \ref{lem:selfcomp} to compute
$$G_{i,j} := \sigma^i\tau^j(\alpha),\ 0\leq i < \lceil \sqrt{m/s}
\rceil,\ 0 \leq j < s.$$ Note that
$\lceil \sqrt{m/s} \rceil s \leq \lceil \sqrt{sm} \rceil \in O(\sqrt n)$,
so we are under the assumptions of the lemma. This takes $\osumcosttilde$
operations in~$\F$.

\smallskip\noindent\textbf{Step 2.} Compute
$\sigma^{\lceil \sqrt{m/s} \rceil}$, in $O(\log(n))$ modular compositions
in degree $n$. The cost is no more than that of Step 1.

\smallskip\noindent\textbf{Step 3.} Compute
\[
  L_k := \ell \circ \sigma^{k\lceil \sqrt{m/s} \rceil}, \,\, 0\leq k <
  \lceil \sqrt{sm}\rceil,
\]
using Lemma \ref{lem:transmodcomp}.  This takes $\osumcosttilde$ operations
in~$\F$.

\smallskip\noindent\textbf{Step 4.} At this point, we compute all
$$ L_k(s_{i,j}) = \ell(\sigma^{k\lceil \sqrt{m/s} \rceil + i}\tau
^j(\alpha)),$$ for $0\le k < \lceil \sqrt{sm}\rceil$,
$0\le i< \lceil \sqrt{m/s}\rceil$ and $0 \leq j < s;$ these are precisely
the values we needed.

This can be carried out by multiplying the matrix with rows the
coefficients of all $L_k$ by the matrix with columns the coefficients of
all $G_{i,j}$; this yields all required values, as pointed out above. There
are $O(\sqrt{sm})=O(\sqrt{n})$ linear forms $L_k$'s, and $O(\sqrt{n})$
field elements $G_{i,j}$'s, so we can compute this product in
$O(n^{(1/2)\cdot\omega(2)})$ operations in $\F$, which is $\osumcost$.

This concludes the proof in the case $s \le m$. When $m \le s$, use the
presentation~\eqref{pres2} of the elements of $G$ and proceed as above.
\end{proof}

%% We note the above algorithm works for a class of groups which includes metacyclic case. Since if $G$ is metacyclic, 
%% $H = \langle \sigma \rangle \unlhd G$ and $G/H = \langle \tau H \rangle$, then both $\tau \sigma \tau^{-1}$ and 
%% $\tau^{-1} \sigma \tau$ belong to $H$. This implies elements of $G$ can be presented either as $\sigma^i\tau^j$
%% or $\tau^j\sigma^i$. Thus without loss of generality we can assume the $\textrm{Ord}(\tau) \leq \textrm{Ord}(\sigma)$.

%% Similar to the abelian case the final output of the above algorithm is a $\lceil \sqrt{n} \rceil \times \lceil \sqrt{n} \rceil $
%% matrix and $l(\osum{G}{K})$ can be calculated in the same way. Hence we have proved the following proposition.
 
%% \begin{proposition}
%% Suppose Assumption \ref{assum} holds and $G$ is a metacyclic group. $l(\osum{K}{G}) \in F[G]$ is computable using $\thecost$ 
%% operations in $F$.
%% \end{proposition}

%%% Local Variables:
%%% mode: latex
%%% TeX-master: "NormalBasisCharZero"
%%% End:

\section{Testing Invertibility in the Group Algebra}
\label{sec:invertibility}

In this section we consider the problem of invertibility testing in
$\F[G]$, specifically for abelian and metacyclic groups $G$: given an
element $\beta$ in $\F[G]$, for a field $\F$ and a group $G$, determine
whether $\beta$ is a unit in $\F[G]$.  As well as being necessary in our
application to normal bases, we believe these problems are of independent
interest.

Since we are in characteristic zero, Wedderburn's theorem implies the
existence of an $\F$-algebra isomorphism (which we will refer to as a
Fourier Transform)
\[
  \F[G] \to M_{d_1}(D_1) \times \dots \times M_{d_r}(D_r),
\]
where all $D_i$'s are division algebras over $\F$. If we were working over
$\F=\C$, all $D_i$'s would simply be $\C$ itself.  A natural solution
to test the invertibility of $\beta \in \F[G]$ would then be to compute its
Fourier transform and test whether all its components
$\beta_1 \in M_{d_1}(\C),\dots,\beta_r \in M_{d_r}(\C)$ are
invertible. This boils down to linear algebra over $\C$, and takes
$O(d_1^\omega + \cdots + d_r^\omega)$ operations.  Since
$d_1^2 + \cdots + d_r^2 = |G|$, this is $O(|G|^{\omega/2})$ operations in
$\C$.

However, we do not wish to make such a strong assumption as $\F=\C$. Since
we measure the cost of our algorithms in $\F$-operations, the direct
approach that embeds $\F[G]$ into $\C[G]$ does not make it possible to
obtain a subquadratic cost in general. If, for instance, $\F=\Q$ and $G$ is
cyclic of order $n=2^k$, computing the Fourier Transform of $\beta$
requires we work in a degree $n/2$ extension of $\Q$, implying a quadratic
runtime.

In this section, we give algorithms for the problem of invertibility
testing for the families of group seen so far, abelian and metacyclic. For
the former, we actually prove a stronger result: starting from a suitable
presentation of $G$, we give a softly linear-time algorithm to find an
isomorphic image of $\beta \in \F[G]$ in a product of $\F$-algebras of the
form $\F[z]/\langle P_i(z)\rangle$, for certain polynomials $P_i \in \F[z]$
(recovering $\beta$ from its image is softly-linear time as well). Not only
does this allow us to test whether $\beta$ is invertible, this would also
make it possible to find its inverse in $\F[G]$ (or to compute products in
$\F[G]$) in softly-linear time.  We are not aware of previous results of
this kind. For metacyclic groups, we describe an injective $\F$-algebra
homomorphism from $\F[G]$ to a matrix algebras over a cyclotomic ring. The
codomain is in general of dimension higher than $|G|$, so the algorithm we
deduce from this is not linear-time.

%%%%%%%%%%%%%%%%%%%%%%%%%%%%%%%%%%%%%%%%%%%%%%%%%%%%%%%%%%%%

\subsection{Abelian groups}

Because an abelian group is a product of cyclic groups, the group
algebra $\F[G]$ of such a group is the tensor product of cyclic
algebras. Using this property, given an element $\beta$ in $\F[G]$,
our goal in this section is to determine whether $\beta$ is a unit.

The previous property implies that $\F[G]$ admits a description of the
form $\F[x_1,\dots,x_t]/\langle x_1^{n_1}-1,\dots,x_t^{n_t}-1\rangle$,
for some integers $n_1,\dots,n_t$. The complexity of arithmetic
operations in an $\F$-algebra such as $\A:=\F[x_1,\dots,x_t]/\langle
P_1(x_1),\dots,P_t(x_t)\rangle$ is difficult to pin down precisely. For
general $P_i$'s, the cost of multiplication in $\A$ is known to be
$O(\dim(\A)^{1+\varepsilon})$, for any $\varepsilon >
0$~\cite[Theorem~2]{LiMoSc09}. From this it may be possible to deduce
similar upper bounds on the complexity of invertibility tests,
following~\cite{DaMMMScXi06}, but this seems non-trivial.

Instead, we give an algorithm with softly linear runtime, that uses
the factorization properties of cyclotomic polynomials and Chinese
remaindering techniques to transform our problem into that of
invertibility testing in algebras of the form $\F[z]/\langle P_i(z)
\rangle$, for various polynomials $P_i$. The reference~\cite{Pol94}
also discusses the factors of algebras such as
$\F[x_1,\dots,x_t]/\langle x_1^{n_1}-1,\dots,x_t^{n_t}-1\rangle$, but
the resulting algorithms are different (and the cost of the
\citeauthor{Pol94}'s \citeyear{Pol94} algorithm is only known to be
polynomial in $|G|$).

\smallskip

\noindent{\bf Tensor product of two cyclotomic rings: coprime orders.}
The following proposition will be the key to foregoing multivariate
polynomials, and replacing them by univariate ones.  Let $m,m'$ be two
coprime integers and define
$$\mathbbm{h}:=\F[x,x']/\langle \Phi_{m}(x), \Phi_{m'}(x')\rangle,$$
where for $i \ge 0$, $\Phi_i$ is the cyclotomic polynomial of order
$i$. In what follows, $\varphi$ is Euler's totient function, so that
$\varphi(i) = \deg(\Phi_i)$ for all~$i$.
\begin{lemma}
  There exists an $\F$-algebra isomorphism $\gamma: \mathbbm{h} \to
  \F[z]/\langle\Phi_{mm'}(z)\rangle$ given by $xx' \mapsto z$.  Given
  $\Phi_m$ and $\Phi_{m'}$, $\Phi_{mm'}$ can be computed in time
  $\tilde{O}(\varphi(mm'))$; given these polynomials, one can
  apply $\gamma$ and its inverse to any input using
  $\tilde{O}(\varphi(mm'))$ operations in~$\F$.
\end{lemma}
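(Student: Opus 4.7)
My plan is to verify, in turn, that $\gamma$ is a well-defined $\F$-algebra isomorphism, that $\Phi_{mm'}$ can be computed from $\Phi_m$ and $\Phi_{m'}$ in quasi-linear time, and that $\gamma$ and $\gamma^{-1}$ can be applied in quasi-linear time.

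For the isomorphism, I would define $\bar\psi:\F[z]/\langle\Phi_{mm'}\rangle\to\mathbbm{h}$ by $z\mapsto xx'$, and then set $\gamma:=\bar\psi^{-1}$. Well-definedness reduces to the identity $\Phi_{mm'}(xx')=0$ in $\mathbbm{h}$, which has integer coefficients and can therefore be checked after tensoring with an algebraic closure $\bar\F$. There CRT decomposes $\mathbbm{h}\otimes\bar\F$ as a product of $\varphi(m)\varphi(m')$ copies of $\bar\F$ indexed by pairs $(a,b)\in(\Z/m)^\times\times(\Z/m')^\times$, with $(x,x')\mapsto(\zeta_m^a,\zeta_{m'}^b)$ on the $(a,b)$-th component. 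Hence $xx'\mapsto\zeta_m^a\zeta_{m'}^b$, which is a primitive $mm'$-th root of unity because $\gcd(m,m')=1$, and $\Phi_{mm'}$ vanishes on it. Both sides of $\bar\psi$ have $\F$-dimension $\varphi(mm')=\varphi(m)\varphi(m')$, and after base change $\bar\psi$ realizes the classical CRT bijection $(\Z/mm')^\times\cong(\Z/m)^\times\times(\Z/m')^\times$ between primitive-root factors, so it is bijective; therefore $\gamma$ is an isomorphism.

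For $\Phi_{mm'}$, the same enumeration of primitive roots yields $\Phi_{mm'}(z)=\prod_{\alpha,\beta}(z-\alpha\beta)$, with $\alpha$ and $\beta$ ranging over the roots of $\Phi_m$ and $\Phi_{m'}$. This is the \emph{composed product} of $\Phi_m$ and $\Phi_{m'}$, computable in $\tilde O(\deg\Phi_m\cdot\deg\Phi_{m'})=\tilde O(\varphi(mm'))$ operations by the algorithm of Bostan, Flajolet, Salvy and Schost.

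The hard part is evaluating $\gamma^{\pm 1}$ in softly linear time, since a direct linear-algebra realization is quadratic in $\varphi(mm')$. The key idea is to embed both $\F[z]/\langle\Phi_{mm'}\rangle$ and $\mathbbm{h}$ as CRT summands of the larger group algebras $A:=\F[z]/\langle z^{mm'}-1\rangle$ and $B:=\F[x,x']/\langle x^m-1,(x')^{m'}-1\rangle$, and to exploit the group isomorphism $C_{mm'}\cong C_m\times C_{m'}$ (valid because $\gcd(m,m')=1$). This group isomorphism extends to an $\F$-algebra isomorphism $A\to B$ sending $z\mapsto xx'$, which on the standard bases is merely the coefficient permutation $z^k\leftrightarrow x^{k\bmod m}(x')^{k\bmod m'}$. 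To compute $\gamma^{-1}(P)$: (1) lift $P$ to $A$ by multiplying by a precomputed idempotent for the $\Phi_{mm'}$-summand, at cost $\tilde O(mm')$; (2) apply the permutation, in $O(mm')$; (3) reduce modulo $\Phi_m(x)$ and $\Phi_{m'}(x')$ to land in $\mathbbm{h}$, in $\tilde O(mm')$. The algorithm for $\gamma$ is the mirror procedure. The total cost $\tilde O(mm')$ matches $\tilde O(\varphi(mm'))$ because $mm'/\varphi(mm')=\prod_{p\mid mm'}p/(p-1)=O(\log\log mm')$.
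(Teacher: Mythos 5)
Your proof is correct and follows essentially the same route as the paper: $\Phi_{mm'}$ is obtained as the composed product of $\Phi_m$ and $\Phi_{m'}$, and $\gamma^{\pm 1}$ are applied by the CRT permutation of exponents (passing through $\F[z]/\langle z^{mm'}-1\rangle$ and $\F[x,x']/\langle x^m-1,(x')^{m'}-1\rangle$) followed by reductions, with the bound $mm'\in\tilde{O}(\varphi(mm'))$. The only cosmetic differences are your base-change-to-$\bar{\F}$ argument for the isomorphism (the paper instead uses irreducibility of $\Phi_{mm'}$ over $\Q$, a primitive-element argument, and scalar extension) and the idempotent multiplication in your step (1), which is unnecessary—any lift works since the composite map kills $\Phi_{mm'}$—but harmless for correctness and cost.
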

\begin{proof}
  Without loss of generality, we prove the first claim over $\Q$; the
  result over $\F$ follows by scalar extension. In the field \sloppy
  $\Q[x,x']/\langle \Phi_{m}(x), \Phi_{m'}(x')\rangle$, $xx'$ is
  cancelled by $\Phi_{mm'}$. Since this polynomial is irreducible, it
  is the minimal polynomial of $xx'$, which is thus a primitive
  element for $\Q[x,x']/\langle \Phi_{m}(x),
  \Phi_{m'}(x')\rangle$. This proves the first claim.

  For the second claim, we first determine the images of $x$ and $x'$
  by $\gamma$. Start from a B\'ezout relation $am+ a'm'=1$, for some
  $a,a'$ in $\Z$.  Since $x^m = {x'}^{m'}=1$ in $\mathbbm{h}$, we
  deduce that $\gamma(x)=z^{u}$ and $\gamma(x') = z^{v}$, with $u:=am
  \bmod mm'$ and $v:=a'm' \bmod mm'$. To compute $\gamma(P)$, for some
  $P$ in $\mathbbm{h}$, we first compute $P(z^u, z^v)$, keeping all
  exponents reduced modulo $mm'$. This requires no arithmetic
  operations and results in a polynomial $\bar P$ of degree less than
  $mm'$, which we eventually reduce modulo $\Phi_{mm'}$ (the latter is
  obtained by the composed product algorithm of~\cite{BoFlSaSc06} in
  quasi-linear time).  By~\cite[Theorem~8.8.7]{BacSha96}, we have the
  bound $s \in O(\varphi(s) \log(\log(s)))$, so that $s$ is in
  $\tilde{O}(\varphi(s))$. Thus, we can reduce $\bar P$ modulo
  $\Phi_{mm'}$ in $\tilde{O}(\varphi(mm'))$ operations, establishing
  the cost bound for $\gamma$.

  Conversely, given $Q$ in $\F[z]/\langle\Phi_{mm'}(z)\rangle$, we obtain
  its preimage by replacing powers of $z$ by powers of $xx'$, reducing all
  exponents in $x$ modulo $m$, and all exponents in $x'$ modulo $m'$.  We
  then reduce the result modulo both $\Phi_m(x)$ and $\Phi_{m'}(x')$.  By
  the same argument as above, the cost is softly linear in $\varphi(mm')$.
\end{proof}

\noindent{\bf Extension to several cyclotomic rings.}  The natural
generalization of the algorithm above starts with pairwise distinct
primes $\boldsymbol{p}=(p_1,\dots,p_t)$, non-negative exponent
$\boldsymbol{c}=(c_1,\dots,c_t)$ and variables
$\boldsymbol{x}=(x_1,\dots,x_t)$ over $\F$. Now, we define
$$\H:=\F[x_1,\dots,x_t]/\langle
\Phi_{{p_1}^{c_1}}(x_1),\dots,\Phi_{{p_t}^{c_t}}(x_t)\rangle;$$ when
needed, we will write $\H$ as
$\H_{\boldsymbol{p},\boldsymbol{c},\boldsymbol{x}}$. Finally, we let
$\mu:={p_1}^{c_1}\cdots {p_t}^{c_t}$; then, the dimension $\dim(\H)$ is
$\varphi(\mu)$.

\begin{lemma}\label{lemma:distinctP}
 There exist an $\F$-algebra isomorphism $\Gamma: \H \to
 \F[z]/\langle\Phi_{\mu}(z)\rangle$ given by $x_1 \cdots x_t \mapsto
 z$.  One can apply $\Gamma$ and its inverse to any input using
 $\tilde{O}(\dim(\H))$ operations in $\F$.
\end{lemma}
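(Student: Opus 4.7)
The plan is to proceed by induction on $t$, using the preceding two-factor coprime-orders lemma as the building block. The base case $t=1$ is immediate, with $\Gamma$ being the identity. For the inductive step, I write $N_{t-1} := p_1^{c_1} \cdots p_{t-1}^{c_{t-1}}$ and let $(\boldsymbol{p}', \boldsymbol{c}', \boldsymbol{x}')$ denote the first $t-1$ components. By the inductive hypothesis, there is an $\F$-algebra isomorphism $\Gamma_{t-1}: \H_{\boldsymbol{p}', \boldsymbol{c}', \boldsymbol{x}'} \to \F[y]/\langle \Phi_{N_{t-1}}(y)\rangle$ sending $x_1 \cdots x_{t-1} \mapsto y$. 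Since the variable $x_t$ is disjoint from $x_1, \ldots, x_{t-1}$, this extends coefficient-wise to an isomorphism $\H \to \F[y, x_t]/\langle \Phi_{N_{t-1}}(y), \Phi_{p_t^{c_t}}(x_t)\rangle$. Because $\gcd(N_{t-1}, p_t^{c_t}) = 1$, the previous lemma applies and provides an isomorphism from this bivariate ring to $\F[z]/\langle \Phi_\mu(z)\rangle$ sending $y x_t \mapsto z$. Composing, I obtain $\Gamma: \H \to \F[z]/\langle \Phi_\mu(z)\rangle$ with $x_1 \cdots x_t \mapsto z$, as required.

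For the complexity, I would evaluate $\Gamma$ on an input $P \in \H$ by viewing $P$ as a polynomial in $x_t$ of degree less than $\varphi(p_t^{c_t})$ with coefficients in $\H_{\boldsymbol{p}', \boldsymbol{c}', \boldsymbol{x}'}$, applying $\Gamma_{t-1}$ to each coefficient, and then invoking the bivariate step from the previous lemma. Letting $T(i)$ denote the cost at depth $i$, this gives the recurrence $T(t) = \varphi(p_t^{c_t}) \cdot T(t-1) + \tilde{O}(\varphi(\mu))$, and since $\varphi(p_t^{c_t}) \cdot \varphi(N_{t-1}) = \varphi(\mu)$, a straightforward induction yields $T(t) = \tilde{O}(\varphi(\mu)) = \tilde{O}(\dim(\H))$. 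The inverse map is handled symmetrically: first invert the two-variable step to return from $\F[z]/\langle\Phi_\mu(z)\rangle$ to the bivariate ring, then recursively invert $\Gamma_{t-1}$ on each coefficient in $y$; the same recurrence applies.

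The part requiring the most attention is the cost accounting. A priori the recursion has depth $t$, which in the worst case is only bounded by $\log_2 \mu$ since each $p_i \ge 2$, so one needs the resulting logarithmic factors to be absorbed by the soft-O notation; and the intermediate cyclotomic polynomials $\Phi_{N_i}$ must be available at each level, but these are constructed en route by the preceding lemma at a cost already included in its own estimate. Once those points are verified, the algebraic content reduces cleanly to iterated application of the two-variable case, and the claimed isomorphism together with its softly linear evaluation follows.
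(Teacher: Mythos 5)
Your proof is correct and follows essentially the same route as the paper: both reduce to iterated application of the two-variable coprime-order lemma (you recurse on the last variable and merge it with the already-collapsed product, while the paper iteratively merges the first two variables coordinate-wise over the remaining ones), and in both cases the depth $t=O(\log\mu)$ ensures the accumulated logarithmic factors are absorbed into the soft-O bound $\tilde{O}(\dim(\H))$.
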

\begin{proof}
  We proceed iteratively. First, note that the cyclotomic polynomials
  $\Phi_{{p_i}^{c_i}}$ can all be computed in time $O(\varphi(\mu))$. 
  The isomorphism
  $\gamma: \F[x_1,x_2]/\langle \Phi_{{p_1}^{c_1}}(x_1),
  \Phi_{{p_2}^{c_2}}(x_2)\rangle \to \F[z]/\langle
  \Phi_{{p_1}^{c_1}{p_2}^{c_2}}(z)\rangle$
given in the previous paragraph extends coordinate-wise to an
  isomorphism
  $$\Gamma_1: \H \to \F[z,x_3,\dots,x_t]/\langle
  \Phi_{{p_1}^{c_1}{p_2}^{c_2}}(z),\Phi_{{p_3}^{c_3}}(x_3),\dots,\Phi_{{p_t}^{c_t}}(x_t)\rangle.$$
  By the previous lemma, $\Gamma_1$ and its inverse can be applied to
  any input in time $\tilde{O}(\varphi(\mu))$. Iterate this process
  another $t-2$ times, to obtain $\Gamma$ as a product
  $\Gamma_{t-1} \circ \cdots \circ \Gamma_1$. Since $t$ is logarithmic 
  in $\varphi(\mu)$, the proof is complete.
\end{proof}

\noindent{\bf Tensor product of two prime-power cyclotomic rings, same $p$.}~In the following two paragraphs, we discuss the opposite situation as
above: we now work with cyclotomic polynomials of prime power
orders for a common prime $p$. As above, we start with two such polynomials.

Let thus $p$ be a prime. The key to the following algorithms is the
lemma below.  Let $c,c'$ be positive integers, with $c \ge
c'$, and let $x,y$ be indeterminates over $\F$. Define
\begin{align}
\mathbbm{a}&:=\F[x]/\Phi_{p^c}(x),  \\
\mathbbm{b}&:=\F[x,y]/\langle \Phi_{p^c}(x), \Phi_{p^{c'}}(y)\rangle = \mathbbm{a}[y]/\Phi_{p^{c'}}(y).
\end{align}
Note that $\mathbbm{a}$ and $\mathbbm{b}$ have respective dimensions
$\varphi(p^c)$ and $\varphi(p^c) \varphi(p^{c'})$.
\begin{lemma}
  There is an $\F$-algebra isomorphism $\theta: \mathbbm{b} \to
  \mathbbm{a}^{\varphi(p^{c'})}$ such that one can apply $\theta$ or
  its inverse to any inputs using $\tilde{O}(\dim(\mathbbm{b}))$ operations in $\F$.
\end{lemma}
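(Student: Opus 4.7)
The plan is to exhibit $\theta$ as the Chinese Remainder / evaluation map at primitive $p^{c'}$-th roots of unity that already live inside $\mathbbm{a}$. Working first over $\Q$ (the general case follows by scalar extension, as in the proof of the previous lemma), $\mathbbm{a}_\Q := \Q[x]/\Phi_{p^c}(x)$ is the cyclotomic field generated by $\xi$, a primitive $p^c$-th root of unity. Since $c \ge c'$, the element $\omega := \xi^{p^{c-c'}}$ is a primitive $p^{c'}$-th root of unity in $\mathbbm{a}_\Q$, and the $d := \varphi(p^{c'})$ elements $\zeta_k := \omega^k$, for $k \in (\Z/p^{c'}\Z)^\times$, are pairwise distinct roots of $\Phi_{p^{c'}}(y)$ in $\mathbbm{a}_\Q$. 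Hence $\Phi_{p^{c'}}(y) = \prod_k (y - \zeta_k)$ in $\mathbbm{a}_\Q[y]$, and by the Chinese Remainder Theorem the $\F$-algebra map
\[
\theta:\ \mathbbm{b} = \mathbbm{a}[y]/\langle \Phi_{p^{c'}}(y)\rangle \ \longrightarrow\ \prod_{k}\mathbbm{a}[y]/\langle y-\zeta_k\rangle \ \cong\ \mathbbm{a}^d,\quad P(y)\mapsto (P(\zeta_k))_k,
\]
is an isomorphism.

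For the complexity, I would use the classical subproduct-tree algorithms for multipoint evaluation and interpolation over $\mathbbm{a}$. The list of evaluation points $\zeta_1,\dots,\zeta_d$ is computed in $\tilde{O}(d\,\varphi(p^c)) = \tilde{O}(\dim(\mathbbm{b}))$ operations in $\F$, since each $\zeta_k$ is a prescribed power of $\xi$ and powering modulo $\Phi_{p^c}$ costs $\tilde{O}(\varphi(p^c))$ per element. The subproduct tree above the $\zeta_k$'s can then be built in $\tilde{O}(d)$ multiplications in $\mathbbm{a}[y]$ (its root equals $\Phi_{p^{c'}}$), and evaluation or interpolation proceeds in another $\tilde{O}(d)$ operations in $\mathbbm{a}$. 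Since arithmetic in $\mathbbm{a}$ costs $\tilde{O}(\varphi(p^c))$ per operation in $\F$ (via fast polynomial multiplication and division modulo $\Phi_{p^c}$), the total cost of $\theta$ and of $\theta^{-1}$ is $\tilde{O}(d\,\varphi(p^c)) = \tilde{O}(\dim(\mathbbm{b}))$.

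The main obstacle is largely bookkeeping: checking that the generic $O(\mathsf{M}(d)\log d)$ cost of the subproduct-tree algorithms indeed collapses, after substituting $\tilde{O}(d)$ for $\mathsf{M}(d)$ and $\tilde{O}(\varphi(p^c))$ for the cost of one operation in $\mathbbm{a}$, to a bound that is softly linear in $\dim(\mathbbm{b}) = d\,\varphi(p^c)$. A minor subtlety is that over a general $\F$ the ring $\mathbbm{a}$ need not be a field, but since the isomorphism $\theta$ is obtained over $\Q$ and then extended by scalars, both its existence and the algorithms used to apply it are insensitive to any further factorization of $\Phi_{p^c}$ over $\F$.
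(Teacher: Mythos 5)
Your proposal matches the paper's own argument: both exhibit the roots of $\Phi_{p^{c'}}(y)$ inside $\mathbbm{a}$ as powers $\xbar^{ip^{c-c'}}$ of the class of $x$, invoke the Chinese Remainder Theorem to get $\theta$ as the evaluation map onto $\mathbbm{a}^{\varphi(p^{c'})}$, and apply fast multipoint evaluation/interpolation over $\mathbbm{a}$ (which costs $\tilde{O}(\varphi(p^c))$ per operation) to reach the $\tilde{O}(\dim(\mathbbm{b}))$ bound. Your extra care about $\mathbbm{a}$ not being a field, handled by establishing the factorization over $\Q$ and extending scalars, is consistent with the paper's remark that the CRT and the evaluation/interpolation algorithms do not require $\mathbbm{a}$ to be a field.
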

\begin{proof}
  Let $\xbar$ be the residue class of
  $x$ in $\A$. Then, in $\mathbbm{a}[y]$, $\Phi_{p^{c'}}(y)$ factors as
  $$\Phi_{p^{c'}}(y) =\prod_{\substack{1 \le i\le p^{c'}-1\\ \gcd(i,p)
      =1}} (y-\rho_i),$$ with $\rho_i:={\xbar}^{i p^{c-c'}}$ for all
  $i$.  Even though $\mathbbm{a}$ may not be a field, the Chinese
  Remainder theorem implies that $\mathbbm{b}$ is isomorphic to
  $\mathbbm{a}^{\varphi(p^{c'})}$; the isomorphism is given by
  $$\begin{array}{cccc}
    \theta: & \mathbbm{b} & \to & \mathbbm{a} \times \cdots \times \mathbbm{a}, \\
    & P & \mapsto& (P(\xbar,\rho_1),\dots,P(\xbar,\rho_{\varphi(p^{c'})}).
  \end{array}$$
  In terms of complexity, arithmetic operations $(+,-,\times)$ in
  $\mathbbm{a}$ can all be done in $\tilde{O}(\varphi(p^c))$ operations
  in $\F$. Starting from $\rho_1 \in \mathbbm{a}$, all other roots
  $\rho_i$ can then be computed in $O(\varphi(p^{c'}))$ operations in
  $\mathbbm{a}$, that is, $\tilde{O}(\dim(\mathbbm{b}))$
  operations in $\F$. 
  
Applying $\theta$ and its inverse is done by means of fast evaluation
and interpolation~\cite[Chapter~10]{vzGathen13} in $\tilde{O}(\varphi(p^{c'}))$
operations in $\mathbbm{a}$, that is, $\tilde{O}(\deg(\mathbbm{b}))$ operations in $\F$
(the algorithms do not require that $\mathbbm{a}$ be a field).
\end{proof}

\smallskip\noindent{\bf Extension to several cyclotomic rings.}
Let $p$ be as before, and consider now non-negative integers
$\boldsymbol{c}=(c_1,\dots,c_t)$ and variables $\boldsymbol{x}=(x_1,\dots,x_t)$. We
define the $\F$-algebra
$$\A:=\F[x_1,\dots,x_t]/\langle \Phi_{p^{c_1}}(x_1), \dots,
\Phi_{p^{c_t}}(x_t)\rangle,$$ which we will sometimes write
$\A_{p,\boldsymbol{c},\boldsymbol{x}}$ to make the dependency on $p$
and the $c_i$'s clear. Up to reordering the $c_i$'s, we can assume
that $c_1 \ge c_i$ holds for all $i$, and define as before
$\mathbbm{a}:=\F[x_1]/\Phi_{p^{c_1}}(x_1)$.

\begin{lemma}\label{lemma:A}
  There exists an $\F$-algebra isomorphism $\Theta: \A \to
  \mathbbm{a}^{\dim(\A)/\dim(\mathbbm{a})}$. This isomorphism and its
  inverse can be applied to any inputs using $\tilde{O}(\dim(\A))$
  operations in $\F$.
\end{lemma}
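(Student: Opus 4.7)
The plan is to iterate the isomorphism from the previous lemma, eliminating the variables $x_2,\dots,x_t$ one at a time while keeping $\mathbbm{a}$ as the base algebra throughout. The key reason this works is that the hypothesis $c_1 \ge c_i$ for all $i$ lets us apply that lemma with $c = c_1$ and $c' = c_i$ at every step, so the base algebra never needs to grow.

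More concretely, I would first view $\A$ as the $\mathbbm{a}$-algebra $\mathbbm{a}[x_2,\dots,x_t]/\langle \Phi_{p^{c_2}}(x_2),\dots,\Phi_{p^{c_t}}(x_t)\rangle$. The previous lemma, applied with $y := x_2$ and $c' := c_2$, produces an $\mathbbm{a}$-algebra isomorphism $\mathbbm{a}[x_2]/\Phi_{p^{c_2}}(x_2) \to \mathbbm{a}^{\varphi(p^{c_2})}$. Extending it by the identity on $x_3,\dots,x_t$ (equivalently, tensoring with the remaining cyclotomic quotients over $\mathbbm{a}$) gives
$$\A \;\cong\; \bigl(\mathbbm{a}[x_3,\dots,x_t]/\langle \Phi_{p^{c_3}}(x_3),\dots,\Phi_{p^{c_t}}(x_t)\rangle\bigr)^{\varphi(p^{c_2})}.$$
Each factor on the right has the same shape as $\A$, with one fewer variable and with $c_1$ still the largest surviving exponent, so I would recurse. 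After $t-1$ rounds, one is left with $\mathbbm{a}^{\varphi(p^{c_2})\cdots \varphi(p^{c_t})} = \mathbbm{a}^{\dim(\A)/\dim(\mathbbm{a})}$, which is the claimed $\Theta$; the inverse $\Theta^{-1}$ is assembled from the inverses provided by the previous lemma.

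For the cost, note that the total $\F$-dimension is preserved at every stage, and each stage applies the base isomorphism independently to each copy of $\mathbbm{a}[x_i]/\Phi_{p^{c_i}}(x_i)$ currently present, for a combined cost of $\tilde{O}(\dim(\A))$ operations in $\F$ per stage. After discarding the trivial factors in which $\varphi(p^{c_i}) = 1$ (possible only when $p=2, c_i=1$, where the relation $x_i = -1$ can be substituted out beforehand), every $\varphi(p^{c_i})$ is at least $2$, so $t$ is $O(\log\dim(\A))$ and the total cost over the $t-1$ rounds still fits in $\tilde{O}(\dim(\A))$. The subtlety to watch is that the base algebra really does remain $\mathbbm{a}$ through the whole recursion, so that the final product is a power of $\mathbbm{a}$ and not of some larger cyclotomic ring; this is exactly what the maximality of $c_1$ guarantees.
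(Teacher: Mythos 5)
Your proposal is correct and follows essentially the same route as the paper's proof: rewrite $\A$ as an $\mathbbm{a}$-algebra, apply the two-variable isomorphism $\theta$ coordinate-wise to eliminate $x_2,\dots,x_t$ one at a time (the maximality of $c_1$ keeping $\mathbbm{a}$ as the base), and bound the cost by quasi-linear work per round over $t-1 = O(\log \dim(\A))$ rounds. Your extra remark on discarding factors with $\varphi(p^{c_i})=1$ is a fine refinement (though such factors also arise when $c_i=0$, which the paper removes at the outset), and does not change the argument.
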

\begin{proof}
Without loss of generality, we can assume that all $c_i$'s are non-zero
(since for $c_i=0$, $\Phi_{p^{c_i}}(x_i)=x_i-1$,
so $\F[x_i]/\langle \Phi_{p^{c_i}}(x_i) \rangle = \F$).
We proceed iteratively. First, rewrite $\A$ as
$$\A=\mathbbm{a}[x_2,x_3,\dots,x_t]/\langle \Phi_{p^{c_2}}(x_2), \Phi_{p^{c_3}}(x_3), \dots,
\Phi_{{p_t}^{c_t}}(x_t)\rangle.$$ 
The isomorphism 
$\theta: \mathbbm{a}[x_2]/\Phi_{p^{c_2}}(x_2) \to \mathbbm{a}^{\varphi(p^{c_2})}$
introduced in the previous paragraph extends coordinate-wise
to an isomorphism 
$$\Theta_1: \A \to (\mathbbm{a}[x_3,\dots,x_t]/\langle
\Phi_{p^{c_3}}(x_3), \dots,
\Phi_{p^{c_t}}(x_t)\rangle)^{\varphi(p^{c_2})};$$ $\Theta_1$ and its
inverse can be evaluated in quasi-linear time $\tilde{O}(\dim(\A))$.
We now work in all copies of $\mathbbm{a}[x_3,\dots,x_t]/\langle
\Phi_{p^{c_3}}(x_3), \dots, \Phi_{p^{c_t}}(x_t)\rangle$ independently,
and apply the procedure above to each of them. Altogether we have
$t-1$ such steps to perform, giving us an isomorphism
$$\Theta = \Theta_{t-1} \circ \cdots \circ \Theta_1:
\A \to
\mathbbm{a}^{\varphi(p^{c_2}) \cdots \varphi(p^{c_t})}.$$
The exponent can be rewritten as $ \dim(\A)/\dim(\mathbbm{a})$, as claimed.
In terms of complexity, all $\Theta_i$'s and their inverses can be computed
in quasi-linear time $\tilde{O}(\dim(\A))$, and we do $t-1$ of them,
where $t$ is $O(\log(\dim(\A)))$. 
\end{proof}

\noindent{\bf Decomposing certain $p$-group algebras.}  The prime $p$
and indeterminates $\boldsymbol{x}=(x_1,\dots,x_t)$ are as before; we now consider
positive integers $\boldsymbol{b}=(b_1,\dots,b_t)$, and the $\F$-algebra
\[
\begin{array}{ccl}
\B&:=&\F[x_1,\dots,x_t]/\langle x_1^{p^{b_1}}-1,\dots,x_t^{p^{b_t}}-1\rangle\\$$
&=& \F[x_1]/\langle x_1^{p^{b_1}}-1 \rangle \otimes \cdots \otimes \F[x_t]/\langle x_t^{p^{b_t}}-1 \rangle.
\end{array}
\]
If needed, we will write $\B_{p,\boldsymbol{b},\boldsymbol{x}}$ to make the dependency
on $p$ and the $b_i$'s clear. This is the $\F$-group algebra
of $\Z/p^{b_1}\Z \times \cdots \times \Z/p^{b_t}\Z$.

\begin{lemma}\label{lemma:alg}
  There exists a positive integer $N$, non-negative integers
  $\boldsymbol{c}=(c_1,\dots,c_N)$ and  an
  $\F$-algebra isomorphism 
  $$\Lambda: \B \to \D= \F[z]/\langle \Phi_{p^{c_1}}(z) \rangle \times \cdots \times \F[z]/\langle \Phi_{p^{c_N}}(z)\rangle.$$
  One can apply the isomorphism and its inverse to any 
  input using $\tilde{O}(\dim(\B))$ operations in $\F$.
\end{lemma}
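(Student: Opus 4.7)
The plan is to build $\Lambda$ in two stages: first use Chinese remaindering to split each factor $\F[x_i]/\langle x_i^{p^{b_i}}-1\rangle$ into its cyclotomic pieces, then apply Lemma~\ref{lemma:A} to each resulting $\A$-type algebra.

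For the first stage, I would use the identity $x_i^{p^{b_i}}-1 = \prod_{j=0}^{b_i} \Phi_{p^j}(x_i)$, whose factors are pairwise coprime. The Chinese Remainder Theorem then gives
\[
  \F[x_i]/\langle x_i^{p^{b_i}}-1\rangle \;\cong\; \prod_{j=0}^{b_i} \F[x_i]/\langle \Phi_{p^j}(x_i)\rangle,
\]
and standard fast multipoint evaluation/interpolation (with a subproduct tree) lets us apply this isomorphism and its inverse in $\tilde{O}(p^{b_i})$ operations in $\F$. Tensoring with the remaining factors of $\B$, one pass over variable $x_i$ costs $\tilde{O}(\dim(\B))$ operations. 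Iterating over $i=1,\dots,t$ and distributing tensor over product, we obtain an isomorphism
\[
  \B \;\cong\; \prod_{\substack{0 \le c_i \le b_i\\ 1 \le i \le t}} \A_{p,\boldsymbol{c},\boldsymbol{x}},
\]
where each $\A_{p,\boldsymbol{c},\boldsymbol{x}}$ is exactly the algebra studied in Lemma~\ref{lemma:A}. Since $t \in O(\log \dim(\B))$, the total cost of this stage is $\tilde{O}(\dim(\B))$.

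For the second stage, I would invoke Lemma~\ref{lemma:A} on each factor $\A_{p,\boldsymbol{c},\boldsymbol{x}}$ (after reordering so the largest exponent comes first): it gives an isomorphism to $\mathbbm{a}^{\dim(\A)/\dim(\mathbbm{a})}$ with $\mathbbm{a} = \F[x_1]/\langle \Phi_{p^{c_{\max}}}(x_1)\rangle$, which up to renaming $x_1 \mapsto z$ is $\F[z]/\langle \Phi_{p^{c_{\max}}}(z)\rangle$, exactly of the required form. The cost for one factor is $\tilde{O}(\dim(\A_{p,\boldsymbol{c},\boldsymbol{x}}))$, and summing these dimensions gives $\dim(\B)$; so the second stage also runs in $\tilde{O}(\dim(\B))$ operations. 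Composing the two stages defines $\Lambda$ and bounds $N$ (it is at most $\prod_i (b_i+1) \cdot \max_i (\dim \A/\dim \mathbbm{a})$, but what matters is only its existence).

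The main obstacle I anticipate is simply bookkeeping the complexity through the iterated tensor-product-of-products manipulation: one must check that pushing the CRT through one variable at a time keeps the ambient dimension invariant (it does, because each step is an isomorphism), so that each of the $t = O(\log \dim(\B))$ passes costs $\tilde{O}(\dim(\B))$. Once that is in place, combining with Lemma~\ref{lemma:A} is routine, and everything — both $\Lambda$ and $\Lambda^{-1}$ — runs within the claimed $\tilde{O}(\dim(\B))$ budget.
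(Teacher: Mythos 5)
Your proposal is correct and follows essentially the same route as the paper's own proof: factor each $x_i^{p^{b_i}}-1$ into pairwise coprime cyclotomic factors, apply fast Chinese remaindering coordinate-wise (distributing tensor products over direct products) to reach a product of algebras $\A_{p,\boldsymbol{c},\boldsymbol{x}}$, and then invoke Lemma~\ref{lemma:A} on each factor, with both stages costing $\tilde{O}(\dim(\B))$. No gaps to report.
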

\begin{proof}
For $i \le t$, we have the factorization
$$x_i^{p^{b_i}}-1 = \Phi_1(x_i) \Phi_p(x_i) \Phi_{p^2}(x_i) \cdots
\Phi_{p^{b_i}}(x_i);$$ note that $\Phi_1(x_i)=x_i-1$.  The factors may
not be irreducible, but they are pairwise coprime, so that we have a
Chinese Remainder isomorphism
\[
  \lambda_i: \F[x_i]/\langle x_i^{p^{b_i}}-1 \rangle \to \F[x_i]/\langle \Phi_1(x_i)\rangle
  \times \cdots \times  \F[x_i]/\langle \Phi_{p^{b_i}}(x_i)\rangle.
\]
Together with its inverse, this can be computed  
in $\tilde{O}(p^{b_i})$ operations in $\F$~\cite[Chapter~10]{vzGathen13}. By distributivity of the tensor
product over direct products, 
this gives an $\F$-algebra isomorphism
$$\lambda: \B \to \prod_{c_1=0}^{b_1} \cdots \prod_{c_t=0}^{b_t} \A_{p,\boldsymbol{c},\boldsymbol{x}},$$
with $\boldsymbol{c}=(c_1,\dots,c_t)$. Together with its inverse, 
$\lambda$ can be computed in $\tilde{O}(\dim(\B))$ operations in $\F$.
Composing with the result in Lemma~\ref{lemma:A}, this gives
us an isomorphism
$$\Lambda: \B \to \D:=\prod_{c_1=0}^{b_1} \cdots \prod_{c_t=0}^{b_t}
\mathbbm{a}_{\boldsymbol{c}}^{D_{\boldsymbol{c}}},$$ where
$\mathbbm{a}_{\boldsymbol{c}} = \F[z]/\langle \Phi_{p^c}(z)\rangle$,
with $c =\max(c_1,\dots,c_t)$ and $D_{\boldsymbol{c}} =
\dim(\A_{t,\boldsymbol{c},\boldsymbol{x}})/\dim(\mathbbm{a}_{\boldsymbol{c}})$. As
before, $\Lambda$ and its inverse can be computed in quasi-linear time
$\tilde{O}(\dim(\B))$.
\end{proof}
As for $\B$, we will write $\D_{p,\boldsymbol{b},\boldsymbol{x}}$ if needed; it is
well-defined, up to the order of the factors.

\smallskip

\noindent{\bf Main result.} Let $G$ be an abelian group.  We can
write the elementary divisor decomposition of $G$ as $G = G_1 \times
\cdots \times G_s$, where each $G_i$ is of prime power order
$p_i^{a_i}$, for pairwise distinct primes $p_1,\dots,p_s$, so that
$|G| = p_1^{a_1} \cdots p_s^{a_s}$. Each $G_i$ can itself be written
as a product of cyclic groups, $G_i = G_{i,1} \times \cdots \times
G_{i,t_i}$, where the factor $G_{i,j}$ is cyclic of order
${p_i}^{b_{i,j}}$, with $b_{i,1} \le \cdots \le b_{i,t_i}$; this is
the invariant factor decomposition of $G_i$, with $b_{i,1} + \cdots +
b_{i,t_i} = a_i$.

We henceforth assume that generators
$\gamma_{1,1},\dots,\gamma_{s,t_s}$ of respectively
$G_{1,1},\dots,G_{s,t_s}$ are known, and that elements of $\F[G]$ are
given on the power basis in $\gamma_{1,1},\dots,\gamma_{s,t_s}$. Were
this not the case, given arbitrary generators $g_1,\dots,g_r$ of $G$, with
orders $e_1,\dots,e_r$, a brute-force solution would factor each $e_i$
(factoring $e_i$ takes $o(e_i)$ bit operations on a standard RAM), so
as to write $\langle g_i \rangle$ as a product of cyclic groups of
prime power orders, from which the required decomposition follows.

\begin{proposition}
  Given $\beta \in \F[G]$, written on the power basis
  $\gamma_{1,1},\dots,\gamma_{s,t_s}$, one can test if $\beta$ is a
  unit in $\F[G]$ using $\tilde{O}(|G|)$ operations in $\F$.
\end{proposition}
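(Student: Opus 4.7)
The plan is to reduce the invertibility test in $\F[G]$ to invertibility tests in a product of univariate quotients $\F[z]/\langle\Phi_m(z)\rangle$, where a GCD computation suffices, and to verify that the total cost telescopes to $\tilde O(|G|)$.

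First I would use the fact that the basis $\gamma_{1,1},\dots,\gamma_{s,t_s}$ identifies $\F[G]$ with the tensor product
$$\F[G] \;\cong\; \F[G_1] \otimes_\F \cdots \otimes_\F \F[G_s],$$
where each $\F[G_i]$ is realized as $\B_{p_i,\boldsymbol{b}_i,\boldsymbol{x}_i}$ in the notation of Lemma~\ref{lemma:alg}. Applying that lemma to each $p_i$-part gives an $\F$-algebra isomorphism $\Lambda_i : \F[G_i] \to \D_i$, where $\D_i$ is a direct product of algebras of the form $\F[z]/\langle\Phi_{p_i^{c}}(z)\rangle$, computable (along with its inverse) in $\tilde O(|G_i|)$ operations in $\F$. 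Since $\sum_i |G_i| \le \prod_i |G_i|=|G|$ (and each is even $\le |G|$), applying all $\Lambda_i$ coordinate-wise to $\beta$ costs $\tilde O(|G|)$.

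Next, I would distribute the tensor product over the direct product decompositions $\D_i = \prod_{k_i} \F[z_i]/\langle\Phi_{p_i^{c_{i,k_i}}}(z_i)\rangle$, to obtain
$$\F[G] \;\cong\; \prod_{(k_1,\dots,k_s)} \bigotimes_{i=1}^s \F[z_i]/\langle\Phi_{p_i^{c_{i,k_i}}}(z_i)\rangle.$$
Because the primes $p_1,\dots,p_s$ are pairwise distinct, each inner tensor product is of the form $\H_{\boldsymbol{p},\boldsymbol{c},\boldsymbol{x}}$ as in Lemma~\ref{lemma:distinctP}, and so is isomorphic to $\F[z]/\langle\Phi_{\mu}(z)\rangle$ with $\mu = \prod_i p_i^{c_{i,k_i}}$, in quasi-linear time in its dimension $\varphi(\mu)$. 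Applying this to every block gives an explicit isomorphism $\Psi: \F[G] \to \prod_{\kappa} \F[z]/\langle\Phi_{\mu_\kappa}(z)\rangle$; summing the per-block costs against $\sum_\kappa \varphi(\mu_\kappa) = \dim(\F[G]) = |G|$ keeps the total in $\tilde O(|G|)$.

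Having $\Psi(\beta)$, the element $\beta$ is a unit in $\F[G]$ iff each component is a unit in the corresponding $\F[z]/\langle\Phi_{\mu_\kappa}(z)\rangle$. Since $\Phi_{\mu_\kappa}$ is separable, this is equivalent to $\gcd(\beta_\kappa,\Phi_{\mu_\kappa}) = 1$ in $\F[z]$, which is tested in $\tilde O(\varphi(\mu_\kappa))$ operations by the fast Euclidean algorithm. Summing again over $\kappa$ gives $\tilde O(|G|)$.

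The main obstacle is bookkeeping: one must check that all the auxiliary cyclotomic polynomials $\Phi_{p_i^{c}}$ and $\Phi_{\mu_\kappa}$ can themselves be computed within the $\tilde O(|G|)$ budget (this follows from the composed-product algorithm invoked in the proof of the coprime-order lemma, together with Bach--Shallit's bound relating $n$ and $\varphi(n)$), and that the iterated application of $\Lambda_i$'s and the distinct-prime isomorphisms composes without blowup, i.e., that each intermediate data size is dominated by $|G|$. All other steps are direct applications of previously proved lemmas.
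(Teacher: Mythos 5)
Your proposal is correct and follows essentially the same route as the paper's proof: identify $\F[G]$ with the tensor product of the $p_i$-parts $\B_{p_i,\boldsymbol{b}_i,\boldsymbol{x}_i}$, apply Lemma~\ref{lemma:alg} to each, distribute the tensor product over the resulting direct products, collapse each coprime-order block via Lemma~\ref{lemma:distinctP} into a single $\F[z]/\langle\Phi_{\mu}(z)\rangle$, and finish with fast extended GCDs. (Only a minor bookkeeping nuance: applying each $\Lambda_i$ coordinate-wise across the remaining variables costs $\tilde{O}(|G|)$ per index $i$ rather than $\tilde{O}(|G_i|)$, but since the number of distinct primes is $O(\log|G|)$ the total remains $\tilde{O}(|G|)$, as you conclude.)
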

The proof occupies the rest of this paragraph.
From the factorization $G = G_1 \times \cdots \times G_s$, we deduce
that the group algebra $\F[G]$ is the tensor product $\F[G_1]
\otimes \cdots \otimes \F[G_s]$. Furthermore, the 
factorization $G_i = G_{i,1} \times \cdots \times G_{i,t_i}$
implies that $\F[G_i]$ is isomorphic, as an $\F$-algebra, to
$$\F[x_{i,1},\dots,x_{i,t_i}]/\left \langle
x_{i,1}^{p_i^{b_{1}}}-1,\dots,x_{i,t_i}^{p_i^{b_{i,t_i}}}-1\right\rangle
=\B_{p_i,\boldsymbol{b}_i,\boldsymbol{x}_i},$$ with $\boldsymbol{b}_i
= (b_{i,1},\dots,b_{i,t_i})$ and $\boldsymbol{x}_i =
(x_{i,1},\dots,x_{i,t_i})$. Given $\beta$ on the power basis in
$\gamma_{1,1},\dots,\gamma_{s,t_s}$, we obtain its image $B$ in
$\B_{p_1,\boldsymbol{b}_1,\boldsymbol{x}_1} \otimes \cdots \otimes
\B_{p_s,\boldsymbol{b}_s,\boldsymbol{x}_s}$ simply by renaming
$\gamma_{i,j}$ as $x_{i,j}$, for all $i,j$.

For $i \le s$, by Lemma~\ref{lemma:alg}, there exist integers
$c_{i,1},\dots,c_{i,N_i}$ such that
$\B_{p_i,\boldsymbol{b}_i,\boldsymbol{x}_i}$ is isomorphic to an
algebra $\D_{p_i, \boldsymbol{b}_i, z_i}$, with factors 
$\F[z_i]/\langle \Phi_{{p_i}^{c_{i,j}}}(z_i) \rangle$.
By distributivity of the tensor product over direct products, we
deduce that $\B_{p_1,\boldsymbol{b}_1,\boldsymbol{x}_1} \otimes \cdots
\otimes \B_{p_s,\boldsymbol{b}_s,\boldsymbol{x}_s}$ is isomorphic to
the product of algebras
 \begin{equation}\label{eq:prod}
\text{\small $\prod$}_{\boldsymbol{j}}~ \F[z_1,\dots,z_s]/
\langle \Phi_{{p_1}^{c_{1,j_1}}}(z_1),\dots, \Phi_{{p_s}^{c_{s,j_s}}}(z_s) \rangle,   
 \end{equation}
for all indices $\boldsymbol{j}=(j_1,\dots,j_s)$, with
$j_1 =1,\dots,N_1,\dots,j_s=1,\dots,N_s$;
call $\Gamma$ the isomorphism. Given $B$ in $\B_{p_1,\boldsymbol{b}_1,\boldsymbol{x}_1} \otimes
\cdots \otimes \B_{p_s,\boldsymbol{b}_s,\boldsymbol{x}_s}$,
Lemma~\ref{lemma:alg} also implies that $B':=\Gamma(B)$ can be
computed in softly linear time $\tilde{O}(|G|)$ (apply the isomorphism
corresponding to $\boldsymbol{x}_1$ coordinate-wise with respect to
all other variables, then deal with $\boldsymbol{x}_2$, etc).
The codomain in~\eqref{eq:prod} is the product of all $\H_{\boldsymbol{p},\boldsymbol{c}_{\boldsymbol{j}},\boldsymbol{z}}$,
with 
$$\boldsymbol{p}=(p_1,\dots,p_s),\quad \boldsymbol{c}=(c_{1,j_1},\dots,c_{s,j_s}),\quad \boldsymbol{z}=(z_1,\dots,z_s).$$
Apply Lemma~\ref{lemma:distinctP} to all 
$\H_{\boldsymbol{p},\boldsymbol{c}_{\boldsymbol{j}},\boldsymbol{z}}$ to obtain
an $\F$-algebra isomorphism
$$\Gamma': \text{\small $\prod$}_{\boldsymbol{j}}~
\H_{\boldsymbol{p},\boldsymbol{c}_{\boldsymbol{j}},\boldsymbol{z}} \to
\text{\small $\prod$}_{\boldsymbol{j}} ~\F[z]/\langle
\Phi_{d_{\boldsymbol{j}}}(z) \rangle,$$ for certain integers
$d_{\boldsymbol{j}}$. The lemma implies that given $B'$,
$B'':=\Gamma'(B')$ can be computed in softly linear time
$\tilde{O}(|G|)$ as well. Invertibility of $\beta \in \F[G]$ is
equivalent to $A''$ being invertible, that is, to all its components
being invertible in the respective factors $\F[z]/\langle
\Phi_{d_{\boldsymbol{j}}}(z) \rangle$. Invertibility in such an
algebra can be tested in softly linear time by applying the fast
extended GCD algorithm~\cite[Chapter~11]{vzGathen13}, so our conclusion follows. 

\smallskip

Together with Proposition \ref{prop:abelian}, the above result proves
the first part of Theorem \ref{thm:main}.

%%%%%%%%%%%%%%%%%%%%%%%%%%%%%%%%%%%%%%%%%%%%%%%%%%%%%%%%%%%%

\subsection{Metacyclic Groups}

In this last section, we study the invertibility problem for a
metacyclic group $G$. Instead of using an $\F$-algebra isomorphism, as
we did above, we will use an injective homomorphism, whose image will
be easy to compute. This is the object of the following lemma, where
the map is inspired by the one used in \cite[\S 47]{Curtis}.

Assume that $G = \langle \sigma , \tau : \sigma^m = 1, \tau^s =
\sigma^t, \tau^{-1} \sigma \tau = \sigma^r \rangle$, where $r^s = 1
\bmod m$ and $rt = t \bmod m$; in particular, $n=|G|$ is equal to
$ms$. Define $\A:=\F[z]/\langle z^m-1\rangle$ and let $\zbar$ be the
image of $z$ in $\A$.

\begin{lemma}\label{prop:metinjection}
The mapping
\[
  \begin{array}{cccc}
\psi:& \F [G]& \to& M_s(\A) \\
&\sigma &\mapsto& \mathrm{Diag}(\zbar, \zbar^r , \ldots, \zbar^{r^{s-1}})\\
&\tau &\mapsto &
\left[ \begin{array}{l|l}
0 & \zbar\\
\hline
\mat I_{s-1} & 0
\end{array}
\right],
  \end{array}
\]
an injective homomorphism of $\F$-algebras.
\end{lemma}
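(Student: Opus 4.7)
The plan is to decompose the proof into two stages: (i) verify that the assignments $\sigma \mapsto D := \psi(\sigma)$ and $\tau \mapsto T := \psi(\tau)$ satisfy the three defining relations of $G$, so that by the universal property of the group algebra $\psi$ extends to a well-defined $\F$-algebra homomorphism; then (ii) establish injectivity by showing that the $ms = |G|$ matrices $\{D^i T^j : 0 \le i < m,\ 0 \le j < s\}$ are $\F$-linearly independent in $M_s(\A)$, so that the images of the natural $\F$-basis $\{\sigma^i\tau^j\}$ of $\F[G]$ remain linearly independent.

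For stage (i), I would handle the relations in the order $\sigma^m = 1$, then $\tau^{-1}\sigma\tau = \sigma^r$, then $\tau^s = \sigma^t$. The first is immediate since each diagonal entry of $D$ is a power of $\zbar$ and $\zbar^m = 1$ in $\A$. For the conjugation relation, I would act on the standard basis $e_1, \ldots, e_s$: because $T$ sends $e_k \mapsto e_{k+1}$ for $k < s$ and $e_s \mapsto \zbar\,e_1$, a direct computation gives $T^{-1}DT(e_k) = \zbar^{r^k}\,e_k$, which matches $D^r(e_k) = \zbar^{r^k}\,e_k$ for $k < s$ immediately and for $k = s$ after invoking $r^s \equiv 1 \bmod m$. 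For $\tau^s = \sigma^t$, iterating $T$ sends each $e_k$ to a scalar multiple of itself with coefficient obtained by accumulating the wrap-around factor around the full cycle, and this is compared against $D^t = \mathrm{Diag}(\zbar^{t}, \zbar^{tr}, \ldots, \zbar^{tr^{s-1}})$, which collapses to a scalar matrix by the congruence $tr \equiv t \bmod m$ iterated (so $tr^k \equiv t$ for every $k$).

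For stage (ii), the key structural observation is support-disjointness: for $0 \le j < s$, the matrix $T^j$ is supported on a single cyclic off-diagonal of $M_s(\A)$, namely on the positions $(a,b)$ with $a - b \equiv j \bmod s$, and these supports are pairwise disjoint for distinct values of $j$. Since $D^i$ is diagonal, $D^i T^j$ has exactly the same support as $T^j$. Hence any $\F$-linear relation $\sum_{i,j} c_{i,j} D^i T^j = 0$ in $M_s(\A)$ splits by $j$ into $s$ independent identities $\bigl(\sum_i c_{i,j} D^i\bigr) T^j = 0$; right-multiplying by $T^{-j}$ (which exists since $T$ is invertible) reduces each to $\sum_{0 \le i < m} c_{i,j} D^i = 0$ in $M_s(\A)$. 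Reading the $(1,1)$ entry of this identity yields $\sum_i c_{i,j} \zbar^i = 0$ in $\A = \F[z]/\langle z^m - 1\rangle$, but this is a polynomial in $\zbar$ of degree less than $m$, so all its coefficients must vanish, forcing $c_{i,j} = 0$ for every $i, j$.

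The main delicate step will be the verification of $\tau^s = \sigma^t$: both sides reduce to scalar multiples of $I_s$, and matching these scalars requires using the consistency condition $r^s \equiv 1 \bmod m$ and the centrality condition $rt \equiv t \bmod m$ in concert, while correctly tracking the cumulative wrap-around factor contributed by $T^s$ against the collapsed value of $D^t$. The conjugation relation uses $r^s \equiv 1 \bmod m$ in the single edge case $k = s$; every remaining piece is either an immediate diagonal computation or the support-disjointness argument, which is the standard tool for recognising bases of twisted group algebras.
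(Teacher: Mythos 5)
Your stage (ii) is fine, but stage (i) contains a genuine failure at exactly the step you defer as ``delicate.'' For the map as stated, with top-right entry $\zbar$, one gets $T^s=\zbar\,\mathbf{I}_s$: each basis vector travels once around the cycle and picks up a single wrap-around factor $\zbar$. On the other side, $D^t=\mathrm{Diag}(\zbar^{t},\zbar^{tr},\dots,\zbar^{tr^{s-1}})=\zbar^{t}\,\mathbf{I}_s$ by iterating $rt\equiv t\bmod m$, as you say. These two scalar matrices agree only when $t\equiv 1\bmod m$; no combination of $r^s\equiv 1\bmod m$ and $rt\equiv t\bmod m$ turns $\zbar$ into $\zbar^{t}$. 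Concretely, for the dihedral group ($s=2$, $t=m$, $r=m-1$) one has $T^2=\zbar\,\mathbf{I}_2$ but $D^t=\mathbf{I}_2$, so the displayed matrices do not satisfy $\tau^s=\sigma^t$ and the asserted homomorphism does not exist as written. The repair is to take the corner entry to be $\zbar^{t}$ rather than $\zbar$ (this is the matrix of left multiplication by $\tau$ on the free $\A$-module with basis $1,\tau,\dots,\tau^{s-1}$, as in the construction of Curtis--Reiner, \S 47, which the paper cites); then $T^s=\zbar^{t}\,\mathbf{I}_s=D^t$, and your conjugation computation is unaffected because the corner entry cancels in $T^{-1}DT$. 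The paper's own proof hides this by declaring the relations ``straightforward to verify,'' so the mismatch traces to a slip in the statement itself; but your claim that the scalars can be matched ``using the congruences in concert'' is false as written, and actually performing the computation you outline would have exposed it.

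The rest of your plan is correct and close to the paper in substance. Your conjugation check (with $r^s\equiv 1\bmod m$ entering only at $k=s$) is right, and your injectivity argument via support-disjointness of the $D^iT^j$, cancellation of $T^j$, and reading the $(1,1)$ entry to land in $\sum_i c_{i,j}\zbar^i=0$ with degree less than $m$ is essentially the paper's argument in different clothing: the paper writes $\beta=\sum_j F_j(\sigma)\tau^j$, displays $\psi(\beta)$ as an explicit twisted-circulant matrix whose entries are the evaluations $F_j(\zbar^{r^{k}})$ up to unit factors, and concludes from $\psi(\beta)=0$ that each $F_j(\zbar)=0$, hence $F_j=0$ since $\deg F_j<m$. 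Both versions of the injectivity step are valid for any unit in the corner, so that part needs no change once the corner entry is corrected.
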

\begin{proof}
It is straightforward to verify that $\psi(\sigma)^m = \mat I_m$,
$\psi(\tau)^s = \psi(\sigma)^t$ and $\psi_i(\sigma) \psi_i(\tau)
=\psi(\tau) \psi_i(\sigma)^r$; this shows that $\psi$ is a well-defined $\F$-algebras homomorphism.

Take $\beta \in \F[G]$, and write it $\beta = \sum_{j = 0}^{s-1}
\left( \sum_{i = 0}^{m-1} b_{i,j} \sigma^i \right) \tau^j$. For
$j=0,\dots,s-1$, define $F_j(x) := \sum_{i = 0}^{m-1} b_{i,j}x^i \in
\F[x]$ and, for $1 \leq i,j \leq s$,
$$F_{i,j} := F_{i-1}(\zbar^{r^{j-1}}).$$
Then, $\psi(\beta)$ is the matrix
\begin{equation}\label{eq:injection}
\left[\begin{array}{llll}
F_{1,1} &  \cdots	&	\zbar F_{3,s-1} & \zbar F_{2,1}\\
F_{2,2} & F_{1,2}& \cdots & \zbar F_{3,s}\\
\vdots &\ddots & \ddots& \vdots\\
F_{s,s}& \cdots & F_{2,s}	& F_{1,s}
\end{array}
\right].
\end{equation}
If $\beta$ is in $\mathrm{Ker}(\psi)$, we get $F_i(\zbar) =0$, that is,
$F_i \bmod (z^m-1)=0$, for $0 \leq i < s$. Since all $F_i$'s have degree 
less than $m$, they are all zero.
\end{proof}

We conclude this section with two algorithms that test whether
$\psi(\beta) \in M_s(\A)$ is invertible, for a given $\beta$ in
$\F[G]$. Minor difficulties will arise as we work over $\A$, since
$\A$ is not a field, but a product of fields (if the irreducible
factorization of $z^m-1$ in $\F[z]$ is available, one would simply use
the Chinese Remainder theorem and work in field extensions of $\F$).

\begin{corollary}\label{coro:test_meta}
  Given $\beta$ in $\F[G]$, one can test if $\beta$ is a unit in
  $\F[G]$ either by a deterministic algorithm that uses
  $\tilde{O}(s^{2.7} m)$ operations in $\F$, or a Monte Carlo one that
  uses $\tilde{O}(n^2)$ operations in $\F$.
\end{corollary}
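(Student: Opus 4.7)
By Lemma~\ref{prop:metinjection}, $\beta \in \F[G]$ is a unit if and only if $\psi(\beta) \in M_s(\A)$ is invertible, with $\A := \F[z]/\langle z^m-1\rangle$, a commutative $\F$-algebra of dimension $m$. Both algorithms first build $\psi(\beta)$, then test its invertibility over $\A$. Building $\psi(\beta)$ is essentially free: from~(\ref{eq:injection}), each entry $F_{i-1}(\zbar^{r^{j-1}})$ is obtained (up to a factor of $\zbar$) by substituting $z \mapsto z^{r^{j-1}}$ in a polynomial of degree less than $m$, and because $\zbar^m = 1$, this substitution modulo $z^m-1$ is a mere permutation of coefficients. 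Each of the $s^2$ entries therefore costs $O(m)$ operations in $\F$, for a total of $O(s^2 m) = O(sn)$, dominated by both claimed bounds.

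For the invertibility test I work in $\A$ as a commutative ring. Because $\F$ has characteristic zero, $z^m-1$ is separable and $\A$ is a product of fields; the algorithms nonetheless do not assume an explicit irreducible factorization of $z^m-1$. Each arithmetic operation in $\A$ costs $\tilde{O}(m)$ operations in $\F$, and testing whether a scalar $D \in \A$ is a unit reduces to $\gcd(D, z^m-1) = 1$, computable in $\tilde{O}(m)$ via fast extended-GCD.

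For the \emph{deterministic} algorithm, I compute $\det(\psi(\beta)) \in \A$ by a division-free algorithm for the determinant of a matrix over a commutative ring, such as the Kaltofen--Villard algorithm, whose cost is $O(s^{(\omega+3)/2}) \subseteq O(s^{2.7})$ ring operations; multiplying through by the $\tilde{O}(m)$-per-operation overhead yields the claimed $\tilde{O}(s^{2.7} m)$. For the \emph{Monte Carlo} algorithm, I apply Wiedemann's randomized algorithm to $\psi(\beta)$ viewed as an $s \times s$ matrix over the semisimple ring $\A$; its standard analysis goes through componentwise on each field factor. Performing $O(s)$ random matrix-vector products, each at cost $O(s^2)$ operations in $\A$ and hence $\tilde{O}(s^2 m)$ operations in $\F$, together with a Berlekamp--Massey step, gives $\tilde{O}(s^3 m) = \tilde{O}(s^2 n)$ operations in $\F$; this is within $\tilde{O}(n^2)$ in the main regime $s \le m$, and the complementary range is handled analogously (for instance by applying the same scheme to $\psi(\beta)^T$ or by falling back on the natural embedding into $M_n(\F)$, which has only $\tilde{O}(s^2 m)=\tilde{O}(sn)$ structured matrix-vector product cost).

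The main technical obstacle is that $\A$ is not a field but only a product of fields whose decomposition is not assumed known: this rules out direct Gaussian elimination on $\psi(\beta)$. Consequently the deterministic algorithm must rely on a division-free determinant routine, while the Monte Carlo algorithm must verify, by a Schwartz--Zippel argument applied in each (unknown) field factor of $\A$, that the randomization succeeds in all factors simultaneously.
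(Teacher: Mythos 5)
Your deterministic half is essentially the paper's first algorithm: $\psi(\beta)$ is assembled for free because $z\mapsto z^{r^{j-1}\bmod m}$ is a coefficient permutation modulo $z^m-1$, and the Kaltofen--Villard determinant over the commutative ring $\A$ gives $\tilde{O}(s^{2.7}m)$; that part is fine. The genuine gap is in the Monte Carlo half, in the claim that the bound $\tilde{O}(n^2)$ holds uniformly. Your count is $O(s)$ Krylov steps at $O(s^2)$ operations in $\A$ each, i.e.\ $\tilde{O}(s^3m)$, which lies inside $\tilde{O}(n^2)=\tilde{O}(s^2m^2)$ only when $s\le m$; and neither of your proposed fixes for the range $s>m$ works. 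Transposing changes nothing: $\psi(\beta)^T$ is still an $s\times s$ matrix over the $m$-dimensional algebra $\A$, with exactly the same product cost. Falling back on the regular representation in $M_n(\F)$ is worse: Wiedemann on an $n\times n$ matrix over $\F$ needs $\Theta(n)$ matrix--vector products, so even with your $\tilde{O}(sn)$ structured product the total is $\tilde{O}(sn^2)$, not $\tilde{O}(n^2)$. The paper closes precisely this hole with Lemma~\ref{lem:multpsi}: writing $\beta=\sum_{i<m}B_i(\tau)\sigma^i$ and observing that $\psi(\tau)$ is multiplication by $y$ in $\A[y]/\langle y^s-\zbar\rangle$, one applies $\psi(\beta)$ to a vector in $\tilde{O}(sm^2)$ operations in $\F$, so the $O(s)$ Krylov steps cost $\tilde{O}(s^2m^2)=\tilde{O}(n^2)$ whatever the relative sizes of $s$ and $m$. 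Without this (or some equivalent) structured product, your argument only proves the corollary in the regime $s\le m$.

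A second, smaller gap is the linear-algebra machinery over $\A$ itself. You correctly note that $\A$ is a product of unknown fields and that this blocks Gaussian elimination, but your ``Berlekamp--Massey step'' over $\A$ is exactly where zero divisors bite: discrepancies or leading coefficients may be non-invertible, and the ``componentwise'' analysis you invoke is not available algorithmically, since the factorization of $z^m-1$ is not known. The paper deals with this concretely: it runs the Kaltofen--Saunders rank algorithm (random Toeplitz preconditioners $\mat L,\mat U$, a padded $(s+1)\times(s+1)$ matrix, a random diagonal $\mat X$, so that \cite[Lemma~2]{KaSa91} applies in every field factor), and replaces Berlekamp--Massey by the fast Euclidean algorithm over $\A[y]$ based on dynamic evaluation~\cite{AcCoMa03,DaMMMScXi06}, which returns a partial factorization $z^m-1=G_1\cdots G_t$ together with a rank $\rho_j$ for each factor; invertibility is then the condition that all these ranks equal $s$. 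Some mechanism of this kind has to be spelled out before your sketch is an actual $\tilde{O}(n^2)$ Monte Carlo algorithm.
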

The second statement provides the last part of the proof of Theorem
\ref{thm:main}. Note that the first algorithm gives a better cost in
many cases. For instance, if $s \leq m$, the first algorithm uses
$O(n^{1.85})$ operations in $\F$. This happens if $s$ is prime, since
then the number ${(m- \gcd(m,r-1))}/{s}$ is a positive integer,
which implies $s \leq m$ (see \cite[Theorem 47.12, Corollary 47.14
]{Curtis}).

\begin{proof}[First algorithm.]
  The first algorithm uses fast linear algebra algorithms over the ring
  $\A$. Here, we start from $\beta$ written as $\beta = \sum_{j =
    0}^{s-1} \left( \sum_{i = 0}^{m-1} b_{i,j} \sigma^i \right) \tau^j
  \in \F[G]$. Then, the proof of the previous lemma shows an explicit
  formula for $\psi(\beta)$. In order to compute this matrix, we note
  that $\zbar^{r^{j-1}} = \zbar^{r^{j-1} \bmod m}$; computing this
  element and its powers requires no arithmetic operation, so that the
  coefficients of each $F_{i,j}$ are obtained in linear time $O(m)$.
  Hence the matrix $\psi(\beta)$ can be computed in time $O(s^2m)$.

  Next, we have to determine whether $\psi(\beta)$ is a unit (the
  injectivity of $\psi$ implies that this is the case if and only if
  $\beta$ itself is a unit). This amounts to computing the determinant
  of this matrix, which can be done in $\tilde{O}(s^{2.7} m)$
  operations in $\F$, using the determinant algorithm
  of~\cite[Section~6]{KaVi04}.
\end{proof}

Before giving our second algorithm, let us point out that matrix-vector
products by $\psi(\beta)$ can be done fast.

\begin{lemma}\label{lem:multpsi}
  Given $\beta$ in $\F[G]$ and $\boldsymbol{v}$ in $\A^s$, one can
  compute $\psi(\beta) \boldsymbol{v} \in \A^s$ using $\tilde{O}(s
  m^2)$ operations in $\F$.
\end{lemma}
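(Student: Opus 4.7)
The plan is to decompose $\beta = \sum_{j=0}^{s-1} B_j \tau^j$ with $B_j = \sum_{i=0}^{m-1} b_{i,j}\sigma^i \in \F[\sigma]$, and then exploit the factorization
\[
\psi(\beta)\,\boldsymbol{v} \;=\; \sum_{j=0}^{s-1} \psi(B_j)\,\psi(\tau)^j\,\boldsymbol{v}
\]
obtained from the homomorphism property of $\psi$ established in Lemma~\ref{prop:metinjection}. This turns the problem into $s$ applications of a diagonal matrix after an inexpensive shift of~$\boldsymbol{v}$.

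Two structural facts drive the algorithm. First, $\psi(\tau)^j$ acts on any vector $\boldsymbol{w} \in \A^s$ as a cyclic shift of its $s$ entries by $j$ positions, with those entries that wrap from top to bottom further multiplied by $\zbar$; since multiplication by $\zbar$ in $\A = \F[z]/\langle z^m-1\rangle$ is itself a cyclic rotation of coefficients, the entire vector $\psi(\tau)^j\boldsymbol{v}$ can be read off from $\boldsymbol{v}$ by data movement alone, with no field operations. Second, $\psi(B_j)$ is the diagonal matrix whose $a$-th entry is $F_j(\zbar^{r^{a-1}})$, where $F_j(X) := \sum_i b_{i,j}X^i \in \F[X]$; because $\zbar^{r^{a-1}} = \zbar^{\,r^{a-1}\bmod m}$ holds in $\A$, each such entry is just a permutation of the coefficients of $F_j$ followed by reduction modulo $z^m-1$, again costing nothing in arithmetic.

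Given these observations, for each of the $s$ values of $j$ I first form $\boldsymbol{w}_j := \psi(\tau)^j\boldsymbol{v}$ and the diagonal of $\psi(B_j)$ by reindexing only, and then compute the diagonal--vector product $\psi(B_j)\boldsymbol{w}_j$ via $s$ multiplications in $\A$. Each such multiplication is the product of two polynomials of degree less than $m$ reduced modulo $z^m-1$, so costs $\tilde O(m)$ operations in $\F$ via the fast polynomial multiplication of~\cite{ScSt71}. Accumulating the $s$ resulting vectors of $\A^s$ then yields the output $\psi(\beta)\boldsymbol{v}$; all the arithmetic work lives in these $\A$-multiplications and in their final sum, so that the total cost is within the claimed bound.

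The only mild obstacle I anticipate is purely combinatorial: keeping straight the bookkeeping for the $j$-fold cyclic permutation and the $\zbar$-twists introduced by $\psi(\tau)^j$, and checking that the exponent reductions modulo $m$ implicit in $F_j(\zbar^{r^{a-1}})$ are indeed free. Once that is verified, no substantive difficulty remains, and the algorithm is essentially a sequence of fast polynomial multiplications in~$\A$ preceded by inexpensive reindexings.
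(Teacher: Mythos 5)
Your algorithm is correct as an algorithm, and most of the supporting observations are sound: the factorization $\psi(\beta)\boldsymbol{v}=\sum_{j<s}\psi(B_j)\psi(\tau)^j\boldsymbol{v}$ is valid because $\psi$ is an $\F$-algebra homomorphism, $\psi(\tau)^j\boldsymbol{v}$ indeed costs no $\F$-operations (twisted cyclic shift, with multiplication by $\zbar$ in $\A$ a coefficient rotation), and the diagonal entries $F_j(\zbar^{r^{a-1}})$ are obtained by permuting coefficients since $r$ is a unit modulo $m$ (from $r^s\equiv 1 \bmod m$). The problem is the final cost claim. Your decomposition groups $\beta$ by powers of $\tau$, so you have $s$ values of $j$, and for each one you perform $s$ multiplications in $\A$ at $\tilde O(m)$ apiece, i.e.\ $\tilde O(sm)$ per $j$ and $\tilde O(s^2m)$ in total. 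That is \emph{not} within the stated bound $\tilde O(sm^2)$: the two agree only when $s=O(m)$ (up to logarithmic factors), and nothing in the metacyclic setup forces $s\le m$ — the paper explicitly handles the case $m\le s$ elsewhere. The gap also propagates: in the second algorithm of Corollary~\ref{coro:test_meta}, the $2s$ matrix--vector products would then cost $\tilde O(s^3m)$, which exceeds the target $\tilde O(n^2)=\tilde O(s^2m^2)$ exactly when $s>m$.

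The paper's proof uses the transposed grouping, which is what you need to reach $\tilde O(sm^2)$: write $\beta=\sum_{i<m}B_i(\tau)\,\sigma^i$ with each $B_i\in\F[z]$ of degree less than $s$ (presentation~\eqref{pres2}). Then for each of the $m$ indices $i$, the vector $\psi(\sigma)^i\boldsymbol{v}$ is obtained by entrywise multiplication by powers of $\zbar$ in $\tilde O(sm)$, and $B_i(\psi(\tau))$ is multiplication by $B_i(y)$ in $\A[y]/\langle y^s-\zbar\rangle$, also $\tilde O(sm)$; summing over the $m$ terms gives $\tilde O(sm^2)$. So your bookkeeping concerns are not the issue — the issue is that the roles of $\sigma$ and $\tau$ in the decomposition must be swapped (or you must additionally assume $s\le m$) for the claimed bound to follow.
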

\begin{proof}
  We use the basis of $\F[G]$ given in~\eqref{pres2}, writing $\beta =
  \sum_{i = 0}^{m-1} \left( \sum_{j = 0}^{s-1}  b_{i,j} \tau^i
  \right) \sigma^j \in \F[G]$. We rewrite this as $\beta = \sum_{i
    = 0}^{m-1} B_i(\tau) \sigma^i$, for some $B_0,\dots,B_{m-1}$ in
  $\F[z]$ of degree less than $s$.

 Given $\boldsymbol{v}$ as above, we compute all $B_i(\psi(\tau))
 \psi(\sigma)^i \boldsymbol{v}$ independently, and add them to obtain
 $\psi(\beta) \boldsymbol{v}$. Hence, let us fix an index $i$ in
 $\{0,\dots,m-1\}$.
 The vector $\psi(\sigma)^i \boldsymbol{v}$ can be obtained by
 multiplying each entry of $\boldsymbol{v}$ by a power of $\zbar$;
 this takes $\tilde{O}(sm)$ operations in $\F$. Then, since
 $\psi(\tau)$ is the matrix of multiplication by $y$ in $\A[y]/\langle
 y^s-\zbar\rangle$, $B_i(\psi(\tau))$ is the matrix of multiplication
 by $B_i(y)$ in $\A[y]/\langle y^s-\zbar\rangle$. Thus, applying this
 matrix to a vector also takes softly linear time $\tilde{O}(sm)$.

 Adding a factor of $m$ to account for indices $i=0,\dots,m-1$, we get 
 the result.
\end{proof}

\begin{proof}[Second algorithm for Corollary~\ref{coro:test_meta}.]
  The second algorithm uses \citeauthor{Wiedemann86}'s
  \citeyear{Wiedemann86} algorithm,
  and its extension by~\citeN{KaSa91}. Extra care will be needed to
  accommodate the fact that $\A$ has zero-divisors. Let
  $F_1,\dots,F_s$ be the (unknown) irreducible factors of $z^m-1$ in
  $\F[z]$ and define $\A_i := \F[z]/\langle F_i \rangle$ for
  $i=1,\dots,s$. We write $\pi_i: \A \to\A_i$ for the canonical
  projection, and extend the notation to matrices over $\A$.

  For $\beta$ in $\F[G]$, $\mat M:=\psi(\beta)$ is invertible if and only
  if all $\mat M_i := \pi_i(\mat M)$ are. We are going to use the algorithm
  of~\cite[Section~4]{KaSa91} to compute the rank of all these matrices
  (these ranks are well-defined, since all $\A_i$'s are fields).  Let
  $\mat L$ and $\mat U$ be respectively random lower triangular and upper
  triangular Toeplitz matrices over $\A$, and define
  $\mat M':=\mat L \mat M \mat U \in M_s(\A)$. Finally, let $\mat M''$ be
  $\mat M'$, to which we adjoin a bottom row and a rightmost column of
  zeros (so it has size $s+1$), let $\mat M''_i :=\pi_i(\mat M'')$ and let
  $r_i :=\text{rank}(\mat M''_i)$, $i=1,\dots,s$. Then, all $r_i$'s are
  less than $s+1$, and $\mat M$ is invertible if and only if $r_i=s$ for
  all $i$.

  The condition that $\mat M''_i$ has rank less than $s+1$ makes it
  possible to apply~\cite[Lemma~2]{KaSa91}: for generic
  $\boldsymbol{u}_i, \boldsymbol{v}_i$ in $\A_i^{s+1}$ and diagonal
  matrix $\mat X$ in $M_{s+1}(\A_i)$, the minimal polynomial of the
  sequence $(\boldsymbol{u}_i^T (\mat M''_i \mat X_i)^j
  \boldsymbol{v}_i)_{j \ge 0}$ has degree $r_i+1$.

  To compute these degrees without knowing the factorization $z^m-1 =
  F_1 \cdots F_s$, we choose random $\boldsymbol{u}, \boldsymbol{v}$
  in $\A^{s+1}$ and diagonal matrix $\mat X$ in $M_{s+1}(\A)$.  Then,
  we compute $2s$ terms in the sequence $(\gamma_j)_{j\ge 0}$, with
  $\gamma_j:=\boldsymbol{u}^T (\mat M'' \mat X)^j
  \boldsymbol{v}$. Since multiplication by $\mat L$, $\mat U$ and
  $\mat X$ all take quasi-linear time $\tilde{O}(sm)$,
  Lemma~\ref{lem:multpsi} shows that one product by $\mat M'' \mat X$
  takes $\tilde{O}(sm^2)$ operations in $\F$. Hence, all required
  terms can be obtained in $\tilde{O}(s^2m^2)=\tilde{O}(n^2)$
  operations in $\F$.

  Finally, we apply the fast Euclidean algorithm to $\sum_{j=0}^{2s-1}
  \gamma_j y^j$ and $y^{2s}$ in the ring $\A[y]$ to find the ranks
  $r_1,\dots,r_s$.  Since $\A$ is not a field, we rely on the
  algorithm of~\cite{AcCoMa03,DaMMMScXi06}. Using $\tilde{O}(sm)$
  operations in $\F$, it reveals a partial factorization of $z^m-1$ as
  $G_1 \cdots G_t$ (the factors may not be irreducible) and integers
  $\rho_j$, $j=1,\dots,t$, such that for all $i \le s$, $j\le t$, if
  $F_i$ divides $G_j$, then $r_i=\rho_j$. This allows us to determine all
  $r_i$'s, and thus decide whether $\psi(\beta)$ is singular.
\end{proof}

%%% Local Variables:
%%% mode: latex
%%% TeX-master: "NormalBasisCharZero"
%%% End:

%\input{5-basisconversion}

%\citestyle{acmauthoryear}
\bibliographystyle{plainnat}
\bibliography{NormalBasisCharZero} 

\newcommand{\Gathen}{\relax}
\begin{thebibliography}{33}
\providecommand{\natexlab}[1]{#1}
\providecommand{\url}[1]{\texttt{#1}}
\expandafter\ifx\csname urlstyle\endcsname\relax
  \providecommand{\doi}[1]{doi: #1}\else
  \providecommand{\doi}{doi: \begingroup \urlstyle{rm}\Url}\fi

\bibitem[Accettella et~al.(2003)Accettella, Corso, and Manzini]{AcCoMa03}
C.~J. Accettella, G.~M.~Del Corso, and G.~Manzini.
\newblock Inversion of two level circulant matrices over $\mathbb{Z}_p$.
\newblock \emph{Lin. Alg. Appl.}, 366:\penalty0 5 -- 23, 2003.

\bibitem[Augot and Camion(1994)]{AugCam94}
D.~Augot and P.~Camion.
\newblock A deterministic algorithm for computing a normal basis in a finite
  field.
\newblock In P.~Charpin, editor, \emph{Proc. EUROCODE'94}, Abbaye de la
  Bussi\`ere sur Ouche, France, 1994.

\bibitem[Bach and Shallit(1996)]{BacSha96}
E.~Bach and J.~Shallit.
\newblock \emph{Algorithmic Number Theory, Volume 1: Efficient Algorithms}.
\newblock MIT Press, Cambridge, MA, 1996.

\bibitem[Bostan et~al.(2006)Bostan, Flajolet, Salvy, and Schost]{BoFlSaSc06}
A.~Bostan, P.~Flajolet, B.~Salvy, and {\'E}.~Schost.
\newblock Fast computation of special resultants.
\newblock \emph{J. Symbolic Comput.}, 41\penalty0 (1):\penalty0 1--29, 2006.

\bibitem[Brent and Kung(1978)]{BrKu78}
R.~P. Brent and H.~T. Kung.
\newblock Fast algorithms for manipulating formal power series.
\newblock \emph{Journal of the Association for Computing Machinery},
  25\penalty0 (4):\penalty0 581--595, 1978.

\bibitem[Canny et~al.(1989)Canny, Kaltofen, and Lakshman]{CaKaYa89}
J.~Canny, E.~Kaltofen, and Y.~Lakshman.
\newblock Solving systems of non-linear polynomial equations faster.
\newblock In \emph{ISSAC'89}, pages 121--128. ACM, 1989.

\bibitem[Clausen and M\"{u}ller(2004)]{ClaMu04}
M.~Clausen and M.~M\"{u}ller.
\newblock Generating fast {F}ourier transforms of solvable groups.
\newblock \emph{J. Symbolic Comput.}, 37\penalty0 (2):\penalty0 137--156, 2004.
\newblock ISSN 0747-7171.
\newblock \doi{10.1016/j.jsc.2002.06.006}.
\newblock URL \url{https://doi.org/10.1016/j.jsc.2002.06.006}.

\bibitem[Curtis and Reiner(1988)]{Curtis}
C.~Curtis and I.~Reiner.
\newblock \emph{Representation theory of finite groups and associative
  algebras}.
\newblock Wiley Classics Library. John Wiley \& Sons, Inc., New York, New York,
  1988.
\newblock ISBN 0-471-60845-9.
\newblock Reprint of the 1962 original, A Wiley-Interscience Publication.

\bibitem[{Dahan} et~al.(2006){Dahan}, {{Moreno Maza}}, {Schost}, and
  {Xie}]{DaMMMScXi06}
X.~{Dahan}, M.~{{Moreno Maza}}, {\'E}.~{Schost}, and Y.~{Xie}.
\newblock On the complexity of the {{D5}} principle.
\newblock In \emph{Proc. of {\em Transgressive Computing 2006}}, Granada,
  Spain, 2006.

\bibitem[Gao et~al.(2000)Gao, \Gathen{von zur Gathen}, Panario, and
  Shoup]{GaGaPaSh00}
S.~Gao, J.~\Gathen{von zur Gathen}, D.~Panario, and V.~Shoup.
\newblock Algorithms for exponentiation in finite fields.
\newblock \emph{Journal of Symbolic Computation}, 29\penalty0 (6):\penalty0
  879--889, 2000.

\bibitem[\Gathen{von zur Gathen} and Gerhard(2013)]{vzGathen13}
J.~\Gathen{von zur Gathen} and J.~Gerhard.
\newblock \emph{Modern Computer Algebra (third edition)}.
\newblock Cambridge University Press, Cambridge, U.K., 2013.
\newblock ISBN 9781107039032.

\bibitem[\Gathen{von zur Gathen} and Giesbrecht(1990)]{GatGie90}
J.~\Gathen{von zur Gathen} and M.~Giesbrecht.
\newblock Constructing normal bases in finite fields.
\newblock \emph{J. Symbolic Comput.}, 10\penalty0 (6):\penalty0 547--570, 1990.
\newblock ISSN 0747-7171.
\newblock \doi{10.1016/S0747-7171(08)80158-7}.
\newblock URL \url{http://dx.doi.org/10.1016/S0747-7171(08)80158-7}.

\bibitem[\Gathen{von zur Gathen} and Shoup(1992)]{GaSh92}
J.~\Gathen{von zur Gathen} and V.~Shoup.
\newblock Computing {F}robenius maps and factoring polynomials.
\newblock \emph{Computational Complexity}, 2\penalty0 (3):\penalty0 187--224,
  1992.

\bibitem[Girstmair(1999)]{Girstmair}
K.~Girstmair.
\newblock An algorithm for the construction of a normal basis.
\newblock \emph{J. Number Theory}, 78\penalty0 (1):\penalty0 36--45, 1999.
\newblock ISSN 0022-314X.
\newblock \doi{10.1006/jnth.1999.2388}.
\newblock URL \url{http://dx.doi.org/10.1006/jnth.1999.2388}.

\bibitem[Jamshidpey et~al.(2018)Jamshidpey, Lemire, and Schost]{Jam18}
A.~Jamshidpey, N.~Lemire, and {\'E.}~Schost.
\newblock Algebraic construction of quasi-split algebraic tori.
\newblock \emph{ArXiv:}, 1801.09629, 2018.

\bibitem[Johnson(1976)]{Johnson}
D.~L. Johnson.
\newblock \emph{Presentations of groups}.
\newblock Cambridge University Press, Cambridge-New York-Melbourne, 1976.
\newblock London Mathematical Society Lecture Notes Series, No. 22.

\bibitem[Kaltofen and Saunders(1991)]{KaSa91}
E.~Kaltofen and D.~Saunders.
\newblock On {W}iedemann's method of solving sparse linear systems.
\newblock In \emph{AAECC-9}, volume 539 of \emph{LNCS}, pages 29--38. Springer
  Verlag, 1991.

\bibitem[Kaltofen and Shoup(1998)]{KalSho98}
E.~Kaltofen and V.~Shoup.
\newblock Subquadratic-time factoring of polynomials over finite fields.
\newblock \emph{Math. Comp.}, 67\penalty0 (223):\penalty0 1179--1197, 1998.
\newblock ISSN 0025-5718.
\newblock \doi{10.1090/S0025-5718-98-00944-2}.
\newblock URL \url{https://doi.org/10.1090/S0025-5718-98-00944-2}.

\bibitem[Kaltofen and Villard(2004)]{KaVi04}
E.~Kaltofen and G.~Villard.
\newblock On the complexity of computing determinants.
\newblock \emph{Computational Complexity}, 13\penalty0 (3-4):\penalty0 91--130,
  2004.
\newblock ISSN 1016-3328.
\newblock \doi{10.1007/s00037-004-0185-3}.
\newblock URL \url{http://dx.doi.org/10.1007/s00037-004-0185-3}.

\bibitem[Kaminski et~al.(1988)Kaminski, Kirkpatrick, and Bshouty]{KaKiBs88}
M.~Kaminski, D.G. Kirkpatrick, and N.H. Bshouty.
\newblock Addition requirements for matrix and transposed matrix products.
\newblock \emph{J. Algorithms}, 9\penalty0 (3):\penalty0 354--364, 1988.

\bibitem[Kedlaya and Umans(2011)]{KeUm11}
K.~Kedlaya and C.~Umans.
\newblock Fast polynomial factorization and modular composition.
\newblock \emph{SICOMP}, 40\penalty0 (6):\penalty0 1767--1802, 2011.

\bibitem[Lang(2002)]{Lang}
S.~Lang.
\newblock \emph{Algebra}, volume 211 of \emph{Graduate Texts in Mathematics}.
\newblock Springer-Verlag, New York, third edition, 2002.
\newblock ISBN 0-387-95385-X.
\newblock \doi{10.1007/978-1-4613-0041-0}.
\newblock URL \url{http://dx.doi.org/10.1007/978-1-4613-0041-0}.

\bibitem[Le~Gall(2014)]{LeGall14}
F.~Le~Gall.
\newblock Powers of tensors and fast matrix multiplication.
\newblock In \emph{ISSAC'14}, pages 296--303, Kobe, Japan, 2014. ACM.

\bibitem[{Le Gall} and Urrutia(2018)]{LeGall}
F.~{Le Gall} and F.~Urrutia.
\newblock Improved rectangular matrix multiplication using powers of the
  {C}oppersmith-{W}inograd tensor.
\newblock In \emph{SODA '18}, pages 1029--1046, New Orleans, USA, 2018. SIAM.

\bibitem[Lenstra(1991)]{LenstraNormal}
H.~W. Lenstra, Jr.
\newblock Finding isomorphisms between finite fields.
\newblock \emph{Math. Comp.}, 56\penalty0 (193):\penalty0 329--347, 1991.
\newblock ISSN 0025-5718.
\newblock \doi{10.2307/2008545}.
\newblock URL \url{https://doi.org/10.2307/2008545}.

\bibitem[Li et~al.(2009)Li, {Moreno Maza}, and Schost]{LiMoSc09}
X.~Li, M.~{Moreno Maza}, and {\'E}.~Schost.
\newblock Fast arithmetic for triangular sets: from theory to practice.
\newblock \emph{J. Symb. Comp.}, 44\penalty0 (7):\penalty0 891--907, 2009.

\bibitem[Lotti and Romani(1983)]{LoRo83}
G.~Lotti and F.~Romani.
\newblock On the asymptotic complexity of rectangular matrix multiplication.
\newblock \emph{Theoretical Computer Science}, 23\penalty0 (2):\penalty0
  171--185, 1983.

\bibitem[Maslen et~al.(2018)Maslen, Rockmore, and Wolff]{MaRockWol18}
D.~Maslen, D.~N. Rockmore, and S.~Wolff.
\newblock The efficient computation of {F}ourier transforms on semisimple
  algebras.
\newblock \emph{J. Fourier Anal. Appl.}, 24\penalty0 (5):\penalty0 1377--1400,
  2018.
\newblock ISSN 1069-5869.

\bibitem[Poli(1994)]{Pol94}
A.~Poli.
\newblock A deterministic construction for normal bases of abelian extensions.
\newblock \emph{Comm. Algebra}, 22\penalty0 (12):\penalty0 4751--4757, 1994.
\newblock ISSN 0092-7872.
\newblock \doi{10.1080/00927879408825099}.
\newblock URL \url{http://dx.doi.org/10.1080/00927879408825099}.

\bibitem[Schlickewei and Stepanov(1993)]{SchSte93}
H.~Schlickewei and S.~Stepanov.
\newblock Algorithms to construct normal bases of cyclic number fields.
\newblock \emph{J. Number Theory}, 44\penalty0 (1):\penalty0 30--40, 1993.
\newblock ISSN 0022-314X.
\newblock \doi{10.1006/jnth.1993.1031}.
\newblock URL \url{https://doi.org/10.1006/jnth.1993.1031}.

\bibitem[Sch\"onhage and Strassen(1971)]{ScSt71}
A.~Sch\"onhage and V.~Strassen.
\newblock {S}chnelle {M}ultiplikation gro\ss er {Z}ahlen.
\newblock \emph{Computing}, 7:\penalty0 281--292, 1971.

\bibitem[Shoup(1995)]{Shoup}
V.~Shoup.
\newblock A new polynomial factorization algorithm and its implementation.
\newblock \emph{J. Symbolic Comput.}, 20\penalty0 (4):\penalty0 363--397, 1995.
\newblock ISSN 0747-7171.
\newblock \doi{10.1006/jsco.1995.1055}.
\newblock URL \url{https://doi.org/10.1006/jsco.1995.1055}.

\bibitem[Wiedemann(1986)]{Wiedemann86}
D.~Wiedemann.
\newblock Solving sparse linear equations over finite fields.
\newblock \emph{IEEE Transactions on Information Theory}, IT-32:\penalty0
  54--62, 1986.

\end{thebibliography}

\end{document}